\documentclass[sigconf, nonacm]{aamas}

\usepackage{balance} 
\usepackage{algorithm}
\usepackage{algorithmic}

\makeatletter
\gdef\@copyrightpermission{
  \begin{minipage}{0.2\columnwidth}
   \href{https://creativecommons.org/licenses/by/4.0/}{\includegraphics[width=0.90\textwidth]{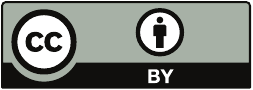}}
  \end{minipage}\hfill
  \begin{minipage}{0.8\columnwidth}
   \href{https://creativecommons.org/licenses/by/4.0/}{This work is licensed under a Creative Commons Attribution International 4.0 License.}
  \end{minipage}
  \vspace{5pt}
}
\makeatother

\setcopyright{ifaamas}
\acmConference[AAMAS '26]{Proc.\@ of the 25th International Conference
on Autonomous Agents and Multiagent Systems (AAMAS 2026)}{May 25 -- 29, 2026}
{Paphos, Cyprus}{C.~Amato, L.~Dennis, V.~Mascardi, J.~Thangarajah (eds.)}
\copyrightyear{2026}
\acmYear{2026}
\acmDOI{}
\acmPrice{}
\acmISBN{}

\title[AAMAS-2026 Formatting Instructions]{Altruism and Fair Objective in Mixed-Motive Markov games}

\author{Yao-hua Franck Xu}
\affiliation{
  \institution{Orange Labs Lannion}
  \city{Lannion}
  \country{France}}
\email{franck.xu@orange.com}

\author{Tayeb Lemlouma}
\affiliation{
  \institution{IRISA}
  \city{Lannion}
  \country{France}}
\email{tayeb.lemlouma@irisa.fr}

\author{Arnaud Braud}
\affiliation{
  \institution{Orange Labs Lannion}
  \city{Lannion}
  \country{France}}
\email{arnaud.braud@orange.com}

\author{Jean-Marie Bonnin}
\affiliation{
  \institution{IRISA}
  \city{Rennes}
  \country{France}}
\email{jean-marie.bonnin@irisa.fr}

\begin{abstract}
Cooperation is fundamental for society's viability, as it enables the emergence of structure within heterogeneous groups that seek collective well-being. However, individuals are inclined to defect in order to benefit from the group's cooperation without contributing the associated costs, thus leading to unfair situations. In game theory, social dilemmas entail this dichotomy between individual interest and collective outcome. The most dominant approach to multi-agent cooperation is the utilitarian welfare which can produce efficient highly inequitable outcomes.

This paper proposes a novel framework to foster fairer cooperation by replacing the standard utilitarian objective with Proportional Fairness. We introduce a fair altruistic utility for each agent, defined on the individual log-payoff space and derive the analytical conditions required to ensure cooperation in classic social dilemmas. We then extend this framework to sequential settings by defining a Fair Markov Game and deriving novel fair Actor-Critic algorithms to learn fair policies. Finally, we evaluate our method in various social dilemma environments.
\end{abstract}

\keywords{Social Dilemma; Game Theory; Markov Game; Multi-agent Reinforcement Learning; Reinforcement Learning; Fairness}

\newcommand{\BibTeX}{\rm B\kern-.05em{\sc i\kern-.025em b}\kern-.08em\TeX}

\begin{document}


\pagestyle{fancy}
\fancyhead{}


\maketitle 

\section{Introduction}

Cooperation is a fundamental mechanism that underpins the stability and success of nearly all multi-agent systems, from biological ecosystems to complex human societies and artificial intelligence. It is the process by which independent agents coordinate their actions to pursue shared objectives, thus facilitating the emergence of complex structures and achieving a collective well-being that would be unreachable for any single agent acting alone. The capacity for effective cooperation is not merely advantageous; it is often crucial for survival, advancement, and the efficient use of resources.

Despite its critical importance, sustained cooperation is notoriously difficult to achieve. The primary obstacle is the inherent tension between individual self-interest and the collective good—a conflict formally studied in game theory as a social dilemma~\cite{social_dilemma}. In these scenarii, rational agents are frequently incentivized to defect from the group effort to pursue personal gain. This individual rationality can paradoxically culminate in collective failure, a phenomenon often described as a \textit{tragedy of the commons}, where the pursuit of self-interest leads to profoundly unfair and suboptimal outcomes for everyone~\cite{tragedy_of_common}.

Conceptually, the opposite to the selfish defection that drives these dilemmas is altruism—the principle of acting for the benefit of others. In the field of multi-agent learning~\cite{sutton_reinforcement_2014, marl-book}, this altruistic motivation is commonly formalized through a utilitarian social welfare objective, which seeks to maximize the (weighted) sum of all agents' utilities ~\cite{selfishness_SR, weighted_SR, cooperation}. By focusing on the greatest total good, this approach encourages agents to undertake actions that, while not always optimal for themselves, produce the most efficient outcomes for the group as a whole, thereby offering a direct method for resolving the dilemma. 

However, while the utilitarian approach can lead to efficient solutions, its exclusive focus on total welfare often ignores how rewards and sacrifices are distributed, frequently resulting in highly inequitable outcomes. For instance, a strict utilitarian calculus might conclude that a healthy individual should donate a kidney to a stranger, as the recipient's immense gain (a saved life) outweighs the donor's loss. Although total welfare is maximized, this outcome imposes an extreme and intuitively unfair sacrifice on one person. This highlights how pure, uncalibrated altruism can conflict with the principles of fairness necessary for stable cooperation. True, sustainable cooperation, therefore, requires more than just promoting the collective good; it demands a mechanism to ensure that the costs and benefits of working together are distributed fairly among all participants.

Fairness was first studied in resource allocation problems~\cite{nsw, pf_intro, PF_study}. In Multi-Agent Reinforcement Learning (MARL), fairness is merely emerging as an active research area and is usually treated alongside cooperation. The popular approaches to promote fairness in MARL use reward shaping~\cite{reward_shaping_0, reward_shaping_1, reward_shaping_2, reward_shaping_3} or role-based reward shaping~\cite{role_reward_shaping}. However, these methods are often environment-dependent, reward instantaneous fair behavior, and neglect the fairness on the full trajectory. While related works ~\cite{simir_to_us_but_not_deep, similar_to_us_but_no_policy_gradient} address similar problems, these studies are limited to finite trajectories and lack a deep analysis of the policy gradient.

To address these limitations and the need for a fairer approach, this paper introduces a novel framework designed to explicitly foster more equitable cooperation with heterogeneous agents. We move beyond simple reward maximization and propose a new objective based on Proportional Fairness, a concept that effectively balances efficiency with equity. 

Our contributions are threefold. First, we introduce a fair altruistic utility for each agent, defined on the individual log-payoff space, and derive the analytical conditions needed to ensure cooperation in classic social dilemmas. Second, we extend this framework to sequential settings by defining a Fair Altruistic Markov Game with infinite horizon. Finally, based on this new game formulation, we derive novel fair Actor-Critic algorithms designed to learn policies that achieve both high collective rewards and equitable outcomes, and we validate our method in various social dilemma environments.

\section{Preliminary}
\subsection{Notation}
We denote the set of real numbers by $\mathbb{R}$ and positive real numbers by $\mathbb{R}_+$. We denote $d$-dimensional real-numbered vector spaces as $\mathbb{R}^d$. By $\mathcal{P}(A)$, we denote the power set of a set $A$. For any probability measure $\mathbb{P}$, we denote the probability of an event $X=x$ as $\mathbb{P}(X=x)$. Similarly, we write $\mathbb{P}(X=x \,|\, Y=y)$ for conditional probabilities. When it is clear, which random variable is referred to, we regularly omit it to shorten notation, i.e.\ $\mathbb{P}(X=x) = \mathbb{P}(x)$.

We denote that a random variable $X$ follows a probability distribution $p$ by $X \sim p$. For any random variable $X \sim p$, we denote by $\mathbb{E}_{X\sim p}[X]$ its expectation. We denote the set of probability distributions over some measurable space $\mathcal{A}$ as $\Delta(\mathcal{A})$. We write $|A|$ for the cardinality of a finite set $A$. We use the vector notation $\vec{.}$ to refer to any element in a cartesian product, i.e for any sets $A_1,\,\dots,\,A_n$, $\vec{a}=(a_1,\,\dots,\, a_n)\in\bigtimes_{i=1}^n A_i$.

\subsection{Normal-form games}
We define a normal-form game $G$ by the tuple $\langle N, \{S_i\}_{i \in N}, \{u_i\}_{i \in N} \rangle$, where $N = \{1, 2, \ldots, n\}$ is the set of players, $S_i$ is the set of pure strategies available to player $i$.
Let $\mathcal{G}_{N, S}$ be the set of all normal form games with a fixed set of players $N = \{1, \ldots, n\}$ and fixed sets of pure strategies $S=(S_i)_{i \in N}$. $p_i: S \rightarrow \mathbb{R}$ is the payoff function for player $i$, which assigns a real-valued utility to each strategy profile $s \in S$. In this work, we assume the utility to be bounded.

\paragraph{Nash equilibrium}
In game theory, a fundamental solution concept is the Nash equilibrium where no player can benefit by changing their strategy while the other players keep theirs unchanged. Formally, a strategy profile $s^* = (s_1^*, s_2^*, \dots, s_n^*)$ is a Nash equilibrium if, for every player $i \in N$ and every alternative strategy $s_i \in S_i$:
\begin{equation*}
    p_i(s_i^*, s_{-i}^*) \geq p_i(s_i, s_{-i}^*).
\end{equation*}

\subsection{Altruistic games}
Altruism in game theory is often framed through social welfare functions (SWF), which aggregate individual utilities into collective wealth. While utilitarian SWFs maximize overall efficiency, they can permit extreme inequality. In contrast, "fair" SWFs satisfy axioms of efficiency (Pareto improvement), equity (Pigou-Dalton), and anonymity to ensure equitable distribution \cite{10.7551/mitpress/2954.001.0001}.
 
Following intuition, it is the smallest proportion of social welfare that needs to be offered to each player to make the socially optimal outcome a stable Nash equilibrium. In other terms, it is a measure of how much incentive is needed to make a selfish person act cooperatively. It's a way to measure the tension between an individual's self-interest and the collective good. This is the $\alpha$-selfish game, the most studied altruistic game~\cite{selfishness_SR}.

In this work, we adopt a slightly different definition~\cite{alpha_altruistic}.
\begin{definition}{($\alpha$-altruistic extension of a Normal-form Game)}
Given any normal-form game $G = \langle N, \allowbreak \{S_i\}_{i \in N}, \{p_i\}_{i \in N}\rangle$, we can induce an altruistic game $G(\alpha) \doteq \{N, \{S_i\}_{i \in N}, \{u_i\}_{i \in N}\}$ where, $u_i(s) \doteq (1-\alpha)p_i(s) + \alpha SW(s)$. The altruism level of a strategic game $G$ is then defined as:
\begin{equation}
    \alpha_G = \underset{\alpha}{\inf}\{\alpha \in \mathbb{R}_+ | G(\alpha) \text{ is } \alpha\text{-altruistic}\}
\end{equation}
where, $G$ is $\alpha$-altruistic if, for some $\alpha \ge 0$, a pure Nash equilibrium of $G(\alpha)$ is a social optimum of $G$.
    
\end{definition}

\subsection{Social dilemmas}
Social dilemmas are a class of normal-form games where players create a collectively worse outcome for the entire group when acting in their own self-interest. It highlights the dualism between individual preferences and the collective good. 

More formally, social dilemmas are symmetrical two-players normal-form games with payoff matrices similar as described in Table 1.

\begin{table}[h!]
\centering
\caption{Outcome categories in the payoff matrix}
\begin{tabular}{|c|c|c|}
\hline
 & C & D \\ 
\hline
C & $R, R$ & $S, T$ \\ 
\hline
D & $T, S$ & $P, P$ \\ 
\hline
\end{tabular}
\end{table}

where $R$ is the reward for mutual cooperation, $P$ the Punishment for mutual defection, $S$ the sucker's payoff for cooperating while the other defects and $T$ the temptation to defect while the other cooperates.

A situation is defined as a social dilemma if the payoffs satisfy the following preference inequalities.

\begin{table}[h!]
\centering
\caption{Set of inequalities that define social dilemmas.}
\renewcommand{\arraystretch}{1.2}
\begin{tabular}{p{1.5cm} p{6.2cm}}
\hline
\textbf{Inequality} & \textbf{Preference} \\
\hline
$R>P$& The individual prefers mutual cooperation (C, C) to mutual defection (D, D). \\
$R > S$& The individual prefers mutual cooperation to unilateral cooperation (C, D). \\
$2R > T + S$& The group prefers mutual cooperation to unilateral cooperation/defection. \\
$T > R$ or $P > S$ & The individual prefers unilateral defection (D, C) to mutual cooperation (aka, greed) or the individual prefers mutual defection to unilateral cooperation (aka fear). \\
\hline
\end{tabular}
\end{table}

\subsection{Proportional Fairness}
The resource allocation problem is the problem of allocating limited resources to competing tasks or individuals while maximizing an objective function -- for instances, performance, equity, or costs. The Proportional fairness (PF) objective was introduced to provide a compromise between efficiency and some kind of fairness, and was firstly introduced in telecommunication for rate control~\cite{PF_study, pf_intro, game_theory_harm_ratio}. 

\textbf{Formal Definition}: Let N be the set of agents, $X\subseteq\mathbb{R}_+^N$ the set of all feasible allocations. Let us denote by $u_i: X\rightarrow\mathbb{R}_+$ the utility of agent $i$. A solution $x^*\in X$ is proportionally fair if for any $x\in X$, the sum of proportional variation is negative:
\begin{equation*}
    \sum_{i=1}^N \frac{u_i(x)-u_i(x^*)}{u_i(x^*)}\leq 0
\end{equation*}
Proportional Fairness has been widely studied in the literature. In particular, it has been proven that a proportional fair allocation $x^*$ is a solution of the following optimization problem:
\begin{equation*}
    \max_{x\in X} \, \sum_{i=1}^N \log u_i(x)
\end{equation*}
Proportional fairness is directly linked to the Nash welfare function~\cite{nsw_def, nsw}, which seeks to maximize the product of agents' utilities, as maximizing the product of utilities is equivalent to maximize the sum of their logarithms in some sense.

\subsection{Markov game}
For modeling sequential decision-making in multi-agent settings, the standard framework is the Markov Game (also known as a Stochastic Game).

A Markov game $\mathcal{M}$ is defined by the following tuple: $(N, \mathcal{S}, \allowbreak  \{\mathcal{A}_i\}_{i \in N},   P, \rho_0, \{r_i\}_{i \in N}, \gamma)$, with $N$ the finite set of $n$ players, $\mathcal{S}$ the space of all possible states of the environment, $\{\mathcal{A}_i\}_{i\in N}$ the individual action set for each player, $P(s',\vec{a}, s) = \mathbb{P}(s'\,|\, s, \vec{a})$ the transition probability kernel, $\rho_0\in\Delta(\mathcal{S})$ the initial state distribution, $\{r_i\}_{i\in N}$ the individual reward for each player and $\gamma \in [0, 1)$ the discount factor.
In this work, we assume the state and action spaces to be finite and the reward functions to be positive and bounded by $R_{\max}$.

\section{Methodology}
In this section, we extend the $\alpha$-altruistic Game by rescaling the payoff function to use the Proportional Fairness function. Then, we study the effect of such a transformation on simple games, specifically two-player social dilemmas.

\subsection{Redefining Altruistic Games}
The former definition of the $\alpha$-altruistic game is not directly compatible with the proportional fairness function because summing individual rewards with log-rewards is not a homogeneous operation. To solve this, we apply a function $F_i$ to selfish rewards so they operate on the same scale as the social welfare function. For instance, we use $F_i=\log$ for PF Additionally, the logarithm function restricts the considered values to strictly positive real values. Hence our alternative definition,

\begin{definition}{($\alpha$-altruistic extension of a Normal-form Game)}
Given any normal-form game $G = \langle N, \allowbreak \{S_i\}_{i \in N}, \{p_i\}_{i \in N}\rangle$, we can induce a fair altruistic game $G(\alpha) \doteq \{N, \{S_i\}_{i \in N}, \{u_i\}_{i \in N}\}$ where, $u_i(s) \doteq (1-\alpha)F_i (p_i(s)-m_p) + \alpha SW(s)$, where $m_p=\inf_{i,s}\left[u_i(s)\right]$ is the min payoff a player can get over all possible strategy (which is assumed to exist) and $F_i:\mathbb{R\rightarrow\mathbb{R}}$ is an increasing function applied on player's payoff. The altruism level of a strategic game $G$ is then defined as:
\begin{equation}
    \alpha_G = \underset{\alpha}{\inf}\{\alpha \in \mathbb{R}_+ | G \text{ is } \alpha\text{-altruism}\}
\end{equation}
where, $G$ is $\alpha$-altruism if, for some $\alpha \ge 0$, a pure Nash equilibrium of $G(\alpha)$ is a social optimum of $G$.
\end{definition}

In the rest of this work, we set $F_i=\log$ for all players $i$.

\subsection{Analysis of Social Dilemmas}
Let $G \in\mathcal{G}_{N,S}$ be a normal form game where $N$ is the set of players, $S$ is the set of strategy profiles and $R$ is the payoff function, where $R_i(s)$ is the payoff to player $i$. We assume all payoffs are strictly positive, i.e., $R_i(s) > 0$ for all $i,\,s$.

We define a transformed game $G'(\alpha) = \langle N, S, R' \rangle$, where the transformed payoff function $R'$ is defined as:
\begin{align*}
    R'_i &= (1-\alpha)\log R_i+ \alpha \sum_{j\in N} \log R_j =\log R_i+\alpha  \sum_{j\neq i} \log R_j
\end{align*}

We now apply this framework to derive the threshold for $\alpha$ in the Prisoner's Dilemma, the Stag Hunt, and the Game of Chicken.

For the case of 2-player games ($N=2$), the rule for Player 1 and Player 2 simplifies to
$$R'_1 = \log R_1 + \alpha\log R_2,$$
$$R'_2 = \log R_2 + \alpha\log R_1.$$

The parameter $\alpha$ represents the degree of payoff interdependence. If $\alpha=0$, the game is unchanged ($R'_i = R_i$). As $\alpha$ increases, a player's transformed payoff becomes increasingly influenced by the other player's original payoff.

Our analysis will identify the conditions on $\alpha$ for the mutual cooperation outcome $(C, C)$ to be an equilibrium in $G'(\alpha)$. For this to be true, no player must have a profitable unilateral deviation. For a symmetric 2-player game, we only need to ensure that Player 1 does not gain by defecting, assuming Player 2 cooperates. This condition is
$$R'_1(C, C) \ge R'_1(D, C).$$

\begin{theorem}
    The altruism level of social dilemma is 
    \begin{equation}
        \alpha_G=
        \begin{cases}
            0 & \text{if } T \le R, \\
            \frac{\log T-\log R}{\log R-\log S} & \text{if } T > R.
        \end{cases}
    \end{equation}
\end{theorem}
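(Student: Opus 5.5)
We work in the transformed two-player game $G'(\alpha)$, in which $R'_1=\log R_1+\alpha\log R_2$. The social optimum of a social dilemma is taken to be mutual cooperation $(C,C)$; this is justified by the inequalities $2R>T+S$ and $R>P$, together with the implicit assumption that $(C,C)$ also maximizes $\sum_j\log R_j$. By the definition of $\alpha$-altruism, $\alpha_G$ is the infimum of the $\alpha\ge 0$ for which $(C,C)$ is a pure Nash equilibrium of $G'(\alpha)$. By symmetry, as noted just before the statement, this reduces to the single inequality $R'_1(C,C)\ge R'_1(D,C)$.

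The first step is to write this inequality in terms of the payoff matrix. Since $R'_1(C,C)=(1+\alpha)\log R$ and $R'_1(D,C)=\log T+\alpha\log S$, the equilibrium condition becomes
\begin{equation*}
\alpha\,(\log R-\log S)\;\ge\;\log T-\log R .
\end{equation*}
All payoffs are strictly positive, so the logarithms are well defined. By the social dilemma inequality $R>S$ and monotonicity of $\log$, the coefficient $\log R-\log S$ is strictly positive. We may therefore divide by it without reversing the inequality, giving
\begin{equation*}
\alpha\ge\frac{\log T-\log R}{\log R-\log S}.
\end{equation*}

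The second step splits into cases on the sign of the numerator.
\begin{itemize}
\item If $T\le R$, the right-hand side is nonpositive. Every $\alpha\in\mathbb{R}_+$ then satisfies the condition, including $\alpha=0$, so the infimum over $\mathbb{R}_+$ is $0$.
\item If $T>R$, the threshold is strictly positive. The feasible set is the closed half-line $[\frac{\log T-\log R}{\log R-\log S},\infty)$, whose infimum is the threshold itself, and this value is attained.
\end{itemize}

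The main obstacle is not the algebra but justifying that the relevant social optimum is $(C,C)$, so that checking this one profile suffices. The point is that an equilibrium at some other profile would not count, because the definition requires the equilibrium to be a social optimum of $G$. I would handle this by appealing to the social dilemma inequalities in Table 2: mutual cooperation is preferred by the group to $(C,D)$, $(D,C)$ and $(D,D)$. Because the paper does not pin down whether the welfare is utilitarian or proportional, I would state this identification explicitly as the convention adopted for social dilemmas.

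A minor point to mention is the case $T\le R$, where also $P>S$ may hold, as in the Stag Hunt. There the deviation condition is already met at $\alpha=0$, so $(C,C)$ is an equilibrium of the original game and the fear condition plays no role.
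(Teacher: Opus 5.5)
Your proof is correct and follows essentially the paper's argument: reduce to $(1+\alpha)\log R\ge\log T+\alpha\log S$, divide by $\log R-\log S>0$, and read off the infimum from the sign of $\log T-\log R$. The paper splits by Prisoner's Dilemma, Stag Hunt and Chicken instead of directly by $T\le R$ versus $T>R$, but these cases are equivalent. One small improvement: you need not assume that $(C,C)$ maximizes $\sum_j\log R_j$, since $2R>T+S$ gives $\sqrt{TS}\le (T+S)/2<R$ by AM--GM, hence $TS<R^2$, and $R>P$ handles $(D,D)$. This $TS<R^2$ is the same inequality the paper derives in its appendix for its $\alpha\le 1$ consistency check.
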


\begin{proof}
    In the transformed game, the payoff becomes 
    \begin{align*}
        R'_2(C, C)=R'_1(C, C) &= (1+\alpha)\log R \\
        R'_1(D, C) &= \log T+\alpha \log S \\
        R'_2(D, C) &= \log S+\alpha \log T
    \end{align*}
    To make $(C, C)$ a Nash Equilibrium, the condition $R'_1(C, C) \ge R'_1(D, C)$ must hold.
    
    \paragraph{The Prisoner's Dilemma}
    The Prisoner's Dilemma is defined by the preference order $T > R > P > S$. The dilemma is that mutual defection $(D, D)$ is the unique Nash Equilibrium, despite mutual cooperation $(C, C)$ being a Pareto-superior outcome.
    
    The condition becomes $(1+\alpha)\log R \ge \log T+\alpha \log S$. Solving this inequality for $\alpha$ yields to the following condition 
    \begin{equation*}
        \alpha > \frac{\log T - \log R}{\log R -\log S}
    \end{equation*}

    \paragraph{The Stag Hunt}
    The Stag Hunt is defined by the preference order $R > P > S$ and $T\leq R$. This game has two pure-strategy equilibrium: (C, C) and (D, D). The cooperative outcome is already an equilibrium. The analysis thus investigates whether the transformation preserves this equilibrium.
    
    The condition is identical to the one above: $R'_1(C, C) \ge R'_1(D, C)$. The resulting inequality is also $(1+\alpha)\log R \ge \log T+\alpha \log S$. However, the implications are different. In a Stag Hunt, $T \leq R$, so $\log T - \log R$ is negative, while $\log R-\log S$ is positive. The ratio $\frac{\log T - \log R}{\log R -\log S}$ is therefore negative. As $\alpha$ is always positive, the condition is met for any $\alpha\in[0, 1]$. This confirms that the transformation never change the cooperative equilibrium.

    \paragraph{The Game of Chicken}
    The Game of Chicken is defined by the preference order $T > R > S \geq P$. Mutual cooperation $(C, C)$ is not an equilibrium because each player is tempted to defect to achieve the highest payoff.

    To make $(C, C)$ a Nash Equilibrium, the following condition must be met
    \begin{equation*}
        \alpha > \frac{\log T - \log R}{\log R -\log S}
    \end{equation*}

    Therefore, cooperation becomes a Nash Equilibrium whenever the following condition is met 
    \begin{equation*}
        \alpha=
        \begin{cases}
            0 & \text{if } T \le R, \\
            \frac{\log T-\log R}{\log R-\log S} & \text{if } T > R.
        \end{cases}
    \end{equation*}

    To ensure consistency with the definition of altruism level $\alpha\leq 1$, we require $TS \le R^2$, which holds for standard payoff matrices. 

\end{proof}

In each case where cooperation is not initially stable (Prisoner's Dilemma, Chicken), the required threshold for $\alpha$ is a ratio of logarithmic payoffs. The numerator, $\log(T/R)$, represents the "temptation" to defect from the cooperative state. The denominator reflects the consequences of defection. The transformation effectively allows players to internalize the externalities of their actions, and the threshold for $\alpha$ quantifies the precise degree of internalization needed to make cooperation individually rational.

The use of proportional fairness combined with altruism is evident in traditional static social dilemmas. However, the simplicity of Normal-form games is a limitation. Indeed, real-world phenomena rarely occur in a single step, but rather evolve over time. To accurately model such dynamic systems, the analysis must move beyond such formalism to a framework like the Markov Game.

\section{Fair Altruistic Markov Game}
In this section, we propose and analyze a novel objective function integrating Proportional Fairness and Altruism within the Markov Game framework and derive policy gradient theorems\footnote{We provide more detailed proofs in Appendix \ref{app:appendix}.}.

\subsection{Value functions}
To extend our approach to Markov Games, we must establish a dynamic analogy to the players' static payoff function for a given policy profile $\vec{\pi}$. This can be achieved using the individual state-value function which represents the expected cumulated discounted rewards for a single player over all possible infinite trajectories.

One challenge in defining the state-value function is to construct a single, consistent probability measure, $\mathbb{P}_{\vec{\pi}}$, over the entire infinite set of trajectories, $\mathcal{T} = (\mathcal{S} \times \mathcal{A})^{\infty}$. We do this by defining the probability of any finite sequence and then extending it to the infinite case using the Ionescu-Tulcea Theorem.

\noindent First of all, let define the discounted return $R_i$ for each agent $i\in N$ and for a given trajectory $\tau=(s_0, \, \vec{a_0}, \, s_1,\, \vec{a_1},\,\dots)\in \mathcal{T}$
\begin{equation*}
    R_{i}(\tau)=\sum_{t=0}^{\infty} \gamma^t\;r_{i}(s_{t}, \vec{a}_{t}).
\end{equation*}
As the reward function is bounded, the discounted return converges absolutely.

The state-value function of agent $i$ is the expected return with respect to $\mathbb{P}_{\vec{\pi}}$ given the initial state $s_0\in\mathcal{S}$,

\begin{equation*}
    V_i^{\vec{\pi}}(s) = \underset{\tau \sim \mathbb{P}_{\vec{\pi}}}{\ \mathbb{E}} \left[R_i(\tau) \;\middle|\; s_0=s\right].
\end{equation*}

Similarly, the action-value function of agent $i$ is defined as the expected discounted return given the initial state and the first joint action, 
\begin{equation*}
    Q_i^{\vec{\pi}}(s, \vec{a}) = \underset{\tau \sim \mathbb{P}_{\vec{\pi}}}{\ \mathbb{E}} \left[R_i(\tau) \;\middle|\; s_0=s,\,  \vec{a}_0=\vec{a}\right]
\end{equation*}

In a Markov Game, the Bellman operator is a fundamental concept used in multi-agent reinforcement learning to define the value of a state for a specific agent.
\begin{definition}
In a Markov Game with a set of agents $N$ and joint stationary policies $\vec{\pi}$, the Bellman operator for agent $i$, denoted as $T_i^{\vec{\pi}}$, is applied to any state-value function $V_i(s)$ and is defined as:
$$(T_i^{\vec{\pi}} V_i)(s) = \sum_{\vec{a} \in \mathcal{A}} \pi(\vec{a}\,|\, s) \left( r_i(s, \vec{a}) + \gamma \sum_{s' \in \mathcal{S}} \mathbb{P}(s' | s, \vec{a}) V_i(s') \right).$$
\end{definition}

We can then write the Bellmann theorem which gives a key equation verified by $ V_i^{\vec{\pi}}$
\begin{theorem}
    For any joint policy $\vec{\pi}$, and any state-value function associated with $\vec{\pi}$, the following equation called Bellman equation holds 
    \begin{equation*}
        T_i^{\vec{\pi}} V_i^{\vec{\pi}} = V_i^{\vec{\pi}}
    \end{equation*}
\end{theorem}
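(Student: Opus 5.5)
The plan is the standard one-step decomposition of the discounted return, combined with the Markov property of $\mathbb{P}_{\vec{\pi}}$ furnished by the Ionescu-Tulcea construction.

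\textbf{Step 1: split off the first reward.} For any trajectory $\tau=(s_0,\vec{a}_0,s_1,\dots)$, write
\begin{equation*}
R_i(\tau)=r_i(s_0,\vec{a}_0)+\gamma R_i(\tau_{1:}),
\end{equation*}
where $\tau_{1:}=(s_1,\vec{a}_1,\dots)$ is the shifted trajectory. This rearrangement is legitimate because the series converges absolutely: rewards are bounded by $R_{\max}$ and $\gamma<1$. Taking conditional expectation given $s_0=s$ and using linearity (all terms are bounded by $R_{\max}/(1-\gamma)$) gives
\begin{equation*}
V_i^{\vec{\pi}}(s)=\mathbb{E}\left[r_i(s_0,\vec{a}_0)\,\middle|\,s_0=s\right]+\gamma\,\mathbb{E}\left[R_i(\tau_{1:})\,\middle|\,s_0=s\right].
\end{equation*}

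\textbf{Step 2: evaluate the first term.} By the construction of $\mathbb{P}_{\vec{\pi}}$, the first joint action is drawn as $\vec{a}_0\sim\pi(\cdot\,|\,s)$. Hence the first term equals $\sum_{\vec{a}}\pi(\vec{a}\,|\,s)\,r_i(s,\vec{a})$.

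\textbf{Step 3: evaluate the second term.} Apply the tower property, conditioning on $(s_0,\vec{a}_0,s_1)$; with finite state and action spaces this is just a finite sum. By the construction, $s_1\sim\mathbb{P}(\cdot\,|\,s,\vec{a}_0)$, so the second term becomes
\begin{equation*}
\gamma\sum_{\vec{a}}\pi(\vec{a}\,|\,s)\sum_{s'}\mathbb{P}(s'\,|\,s,\vec{a})\;\mathbb{E}\left[R_i(\tau_{1:})\,\middle|\,s_0=s,\vec{a}_0=\vec{a},s_1=s'\right].
\end{equation*}
The remaining claim is that this inner conditional expectation equals $V_i^{\vec{\pi}}(s')$.

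\textbf{Main obstacle: justifying that inner identity.} It is a Markov/time-homogeneity statement: conditional on $(s_0,\vec{a}_0,s_1=s')$, the law of $\tau_{1:}$ under $\mathbb{P}_{\vec{\pi}}$ is exactly the measure $\mathbb{P}_{\vec{\pi}}(\cdot\,|\,s_0=s')$. I would prove it as follows.
\begin{itemize}
\item On cylinder sets, i.e.\ finite prefixes $(s_1,\vec{a}_1,\dots,s_k)$, the Ionescu-Tulcea finite-dimensional probabilities are products of the kernels $\pi(\vec{a}_t\,|\,s_t)$ and $\mathbb{P}(s_{t+1}\,|\,s_t,\vec{a}_t)$. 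These kernels do not depend on $t$ or on the earlier history, because the policy is stationary and Markov.
\item So, after dividing by the probability of the conditioning prefix, the cylinder probabilities coincide with those of the process started at $s'$.
\item By the uniqueness part of Ionescu-Tulcea (equivalently, the $\pi$--$\lambda$ theorem, since cylinders form a $\pi$-system generating the product $\sigma$-algebra), the two measures agree.
\item The expectations of the bounded measurable function $R_i$ therefore coincide.
\end{itemize}

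\textbf{Conclusion.} Combining Steps 2 and 3 gives exactly $(T_i^{\vec{\pi}}V_i^{\vec{\pi}})(s)=V_i^{\vec{\pi}}(s)$ for every $s$, which is the Bellman equation.

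One subtlety: the conditioning events must have positive probability, so the identity is asserted only for $(s,\vec{a},s')$ with $\pi(\vec{a}\,|\,s)\,\mathbb{P}(s'\,|\,s,\vec{a})>0$. The remaining terms vanish in the sum anyway.
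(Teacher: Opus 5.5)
The paper gives no proof of this statement. It asserts it as ``the Bellman theorem'' right after the unproved remark that $\mathbb{P}_{\vec{\pi}}$ comes from Ionescu--Tulcea, and later only recalls it at the start of the policy-gradient proof. Your argument is correct and supplies that missing justification using the paper's own construction. The first-step split of $R_i$, the tower property over $(s_0,\vec{a}_0,s_1)$, and the cylinder-set/$\pi$--$\lambda$ proof that the shifted trajectory is again the controlled chain started at $s_1$ are all sound.

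One refinement: your null-event caveat should also cover the initial state. The event $\{s_0=s,\vec{a}_0=\vec{a},s_1=s'\}$ has probability $\rho_0(s)\pi(\vec{a}\,|\,s)\mathbb{P}(s'\,|\,s,\vec{a})$. The Bellman operator also evaluates $V_i$ at successor states $s'$ that may have $\rho_0(s')=0$ (e.g.\ with a deterministic start), and there ``conditioning on $s_0=s'$'' under $\mathbb{P}_{\vec{\pi}}$ is undefined. The clean fix, which the paper's own definition also needs, is to read $V_i^{\vec{\pi}}(s)$ as the expectation under the Ionescu--Tulcea measure $\mathbb{P}^{s}_{\vec{\pi}}$ with initial law $\delta_s$. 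Your cylinder computation already compares with ``the process started at $s'$'', so it goes through verbatim under this reading.

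For comparison, finite spaces allow a lighter route that avoids path-space conditioning entirely. By dominated convergence, $V_i^{\vec{\pi}}(s)=\sum_{t\ge 0}\gamma^t\,\mathbb{E}^{s}_{\vec{\pi}}[r_i(s_t,\vec{a}_t)]$, and each term involves only finite-dimensional marginals, which are products of the kernels. Set $r_{\vec{\pi}}(s)=\sum_{\vec{a}}\pi(\vec{a}\,|\,s)r_i(s,\vec{a})$ and $P_{\vec{\pi}}(s,s')=\sum_{\vec{a}}\pi(\vec{a}\,|\,s)\mathbb{P}(s'\,|\,s,\vec{a})$. Then $V_i^{\vec{\pi}}=\sum_{t\ge0}\gamma^t P_{\vec{\pi}}^t r_{\vec{\pi}}=r_{\vec{\pi}}+\gamma P_{\vec{\pi}}V_i^{\vec{\pi}}$, which is the claim.

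Your version costs more, but the conditional identity you isolate in Step 3 gives, with one more line, $Q_i^{\vec{\pi}}(s,\vec{a})=r_i(s,\vec{a})+\gamma\sum_{s'}\mathbb{P}(s'\,|\,s,\vec{a})V_i^{\vec{\pi}}(s')$. The paper's appendix uses this relation without justification, calling it the definition of $Q$, although $Q$ was actually defined as a conditional expected return.
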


\subsection{Fair objective}
From the traditional Markov Game framework, we can induce a new objective that balances fairness and efficiency and extends the altruistic Normal-form game to the Markov game. 
\begin{definition}
    For each agent $i$, we define the proportional fair state value function :
    \begin{equation*}
    V^{\vec{\pi}, \text{ Prop}}(s) =\sum_{j=1}^n \log V_{j}^{\vec{\pi}}(s).
\end{equation*}
\end{definition}
The state value functions play the role of the payoff function and the proportional fairness function is applied on the state value function.
Therefore, we can define a new objective for each agent depending on the altruism level:
\begin{definition}
    The objective for each agent is to maximize
    \begin{align*}
        J_i(\vec{\pi})&=\underset{s_0\sim\rho_0}{\mathbb{E}}\left[(1-\alpha)\log V_i^{\vec{\pi}}(s_0)+\alpha  V^{\vec{\pi}, \text{ Prop}}(s_0)\right]\\
        &=\underset{s_0\sim\rho_0}{\mathbb{E}}\left[\sum_{j=1}^N c_i(j)\log V_j^{\vec{\pi}}(s_0)\right]
    \end{align*}
    where 
     \begin{equation*}
        c_i(j)=\left\{ 
        \def\arraystretch{1.5}
        \begin{array}{ll}
            1&\text{ if } j=i,\\
            \alpha &\text{ if } j\neq i.
    \end{array} 
    \right.
    \end{equation*}
\end{definition}

\subsection{Policy gradient}
Policy gradient methods are a class of reinforcement learning algorithms that directly learn and optimize a parameterized policy, $\pi_\theta(a\,|\,s)$, without consulting a value function. The core principle is to adjust the policy's parameters, $\theta$, to increase the probability of actions that yield high cumulative rewards and decrease the probability of actions that result in low rewards.

Let us suppose that agent $i$'s policy $\pi_i$ is parameterized by $\theta_i$ such that $\pi_{\theta_i}$ is differentiable with respect to $\theta_i$.

Therefore, the goal of agent $i$ is to maximize the objective $J_i(\vec{\theta})=\mathbb{E}_{s_0\sim\rho_0}[\,\sum_{j=1}^Nc_i(j)\log V_j^{\vec{\theta}}(s_0)\,]$. We can demonstrate the differentiability of this objective function with respect to any parameter $\theta$ as summarized in the following theorem.
\begin{theorem}
    (\text{Policy gradient theorem.}) Let $\mathcal{M}$ be a Markov game and $J_i(\vec{\theta})=\mathbb{E}_{s_0\sim\rho_0}\left[\sum_{j=1}^N c_i(j)V_j^{\vec{\theta}}(s_0)\right]$ the objective defined previously and for all agent $i$, let us suppose $Q^{\vec{\pi}}_i$ is differentiable with respect to $\theta_i$ under mild conditions. The gradient of $J_i$ with respect to $\theta_i$ is given by  
    \begin{equation*}
        \nabla_{\theta_i}J_i(\vec{\theta})=\underset{\tau\sim\mathbb{P}^{\vec{\theta}}}{\mathbb{E}}\left[\sum_{t=0}^\infty \sum_{j=1}^Nc_i(j)\frac{Q^{\vec{\theta}}_j(s_t, \, \vec{a}_t)}{V^{\vec{\theta}}_j(s_0)}\nabla_{\theta_i}\log\pi_{\theta_i}(a_{i,\, t} \;|\; s_t)\right]
    \end{equation*}
\end{theorem}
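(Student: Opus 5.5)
The plan is to differentiate the objective with the chain rule through the logarithm, and then apply the standard single-agent policy gradient argument to each $V_j$. Note that the statement writes $J_i$ without the logarithm. However, the displayed gradient contains the factor $1/V_j^{\vec{\theta}}(s_0)$, which is exactly the derivative of $\log V_j$. So I will prove the formula for the log objective of the previous definition, $J_i(\vec{\theta})=\mathbb{E}_{s_0\sim\rho_0}[\sum_j c_i(j)\log V_j^{\vec{\theta}}(s_0)]$.

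Since $\rho_0$ does not depend on $\vec{\theta}$, and the state space is finite, gradient and expectation over $s_0$ commute. Rewards are positive, so $V_j^{\vec{\theta}}(s_0)>0$ and $\log$ is differentiable there. This gives
\begin{equation*}
\nabla_{\theta_i}J_i(\vec{\theta})=\underset{s_0\sim\rho_0}{\mathbb{E}}\left[\sum_{j=1}^N c_i(j)\frac{\nabla_{\theta_i}V_j^{\vec{\theta}}(s_0)}{V_j^{\vec{\theta}}(s_0)}\right].
\end{equation*}
It therefore suffices to compute $\nabla_{\theta_i}V_j^{\vec{\theta}}(s)$ for an arbitrary pair $(i,j)$, i.e.\ the gradient of one agent's value with respect to another agent's parameters.

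For this I will follow Sutton's unrolling argument, using the Bellman equation (the theorem above, $T_j^{\vec{\pi}}V_j^{\vec{\pi}}=V_j^{\vec{\pi}}$).

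\begin{itemize}
\item \textbf{Factorize the joint policy.} Assume independent policies, $\vec{\pi}(\vec{a}\,|\,s)=\prod_k\pi_{\theta_k}(a_k\,|\,s)$. Then $\nabla_{\theta_i}\vec{\pi}(\vec{a}\,|\,s)=\vec{\pi}(\vec{a}\,|\,s)\nabla_{\theta_i}\log\pi_{\theta_i}(a_i\,|\,s)$.
\item \textbf{One-step recursion.} Write $V_j(s)=\sum_{\vec{a}}\vec{\pi}(\vec{a}\,|\,s)Q_j(s,\vec{a})$ with $Q_j(s,\vec{a})=r_j(s,\vec{a})+\gamma\sum_{s'}P(s'\,|\,s,\vec{a})V_j(s')$. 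Since $r_j$ and $P$ do not depend on $\theta$, this yields
\[\nabla V_j(s)=\mathbb{E}_{\vec{a}\sim\vec{\pi}}\big[Q_j(s,\vec{a})\nabla\log\pi_{\theta_i}(a_i\,|\,s)\big]+\gamma\,\mathbb{E}_{\vec{a},s'}\big[\nabla V_j(s')\big].\]
\item \textbf{Unroll.} Iterating the recursion gives
\[\nabla_{\theta_i}V_j(s_0)=\mathbb{E}_{\tau\sim\mathbb{P}^{\vec{\theta}}}\Big[\sum_{t}\gamma^t Q_j(s_t,\vec{a}_t)\nabla_{\theta_i}\log\pi_{\theta_i}(a_{i,t}\,|\,s_t)\,\Big|\,s_0\Big].\]
The expectation over trajectories is taken under the Ionescu–Tulcea measure $\mathbb{P}_{\vec{\pi}}$.
\end{itemize}

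Substituting this into the chain-rule expression, and noting that $V_j(s_0)$ is $\mathcal{F}_0$-measurable so it can be moved inside the trajectory expectation, gives the stated formula. The discount factor $\gamma^t$ appears from the unrolling. Either it is absorbed into the trajectory measure, i.e.\ expectation under the discounted occupancy, or it should be kept explicitly. I will keep it and remark that the statement's convention absorbs it.

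The main obstacle is justifying the infinite unrolling: the interchange of gradient and infinite sum, and the vanishing of the remainder $\gamma^T\mathbb{E}[\nabla V_j(s_T)]$. I would handle this as follows.
\begin{itemize}
\item Since the state and action spaces are finite, $V_j=(I-\gamma P^{\vec{\pi}})^{-1}r^{\vec{\pi}}_j$.
\item This is smooth in $\vec{\theta}$ whenever each $\pi_{\theta_k}$ is, because $I-\gamma P^{\vec{\pi}}$ is invertible with a Neumann series.
\item Hence $\nabla V_j$ is uniformly bounded on $\mathcal{S}$, and the remainder is $O(\gamma^T)\to0$.
\item The series $\sum_t\gamma^t Q_j\nabla\log\pi$ is dominated by $\frac{R_{\max}}{1-\gamma}\sum_t\gamma^t\sup\|\nabla\log\pi\|$. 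This requires bounded score functions, which is the ``mild condition'' in the statement. Dominated convergence then lets me exchange expectation and the sum.
\end{itemize}
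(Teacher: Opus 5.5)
Your proposal is correct and follows essentially the same route as the paper: the chain rule through the logarithm, differentiating the Bellman equation with the factorized joint policy, and unrolling the resulting recursion. The paper's own proof likewise uses the log objective and keeps $\gamma^t$ explicitly. The only difference is in how the unrolling is justified: the paper assumes $V_j$ is differentiable, shows the gradient Bellman operator is a $\gamma$-contraction, and iterates it from $g_0=0$ via Banach--Picard. You instead bound the remainder $\gamma^T\mathbb{E}[\nabla V_j(s_T)]$ directly and obtain differentiability from $V_j=(I-\gamma P^{\vec{\pi}})^{-1}r_j^{\vec{\pi}}$, which is slightly more self-contained.
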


\begin{proof}
Let us derive the objective for agent $i$ using the chain rule,
\begin{align*}
    \nabla_{\theta_i}J_i(\vec{\theta})
    &=\underset{s_0\sim\rho_0}{\mathbb{E}}\left[\sum_{j=1}^N c_i(j)\frac{\nabla_{\theta_i}V_j^{\vec{\theta}}(s_0)}{V_j^{\vec{\theta}}(s_0)}\right]
\end{align*}

Then, we need to find the gradient $\nabla_{\theta_i}V_j^{\vec{\theta}}(s_0)$. To do so, we start from the Bellman equation, differentiate it and define a contraction mapping on the state value function gradient space and apply the Banach-Picard Fixed-Point Theorem to get the formula of the gradient.

\textbf{1. Differentiate the Bellman equation.}
$r_i$, $P$, and policies $\pi_{\theta_j}$ for $j \neq i$ do not depend on $\theta_i$.
Then, the Bellman equation can be differentiate using the product rule into a recursive equation:
\begin{align*}
    \nabla_{\theta_i}V_j^{\vec{\theta}}(s) &=\sum_{\vec{a} \in \mathcal{A}} \nabla_{\theta_i}\pi_{\vec{\theta}}(\vec{a}\,|\, s) \cdot\left( r_j(s, \vec{a}) + \gamma \sum_{s' \in \mathcal{S}} \mathbb{P}(s' | s, \vec{a}) V_j^{\vec{\theta}}(s') \right) \\
    &\quad\quad +\,  \pi_{\vec{\theta}}(\vec{a}\,|\, s)\cdot \gamma \sum_{s' \in \mathcal{S}} \mathbb{P}(s' | s, \vec{a}) \nabla_{\theta_i} V_j^{\vec{\theta}}(s')\\
\end{align*}

Let us define a function $G_{i, j}^{\vec{\theta}}(s)$ which represents the immediate gradient component at state $s$
\begin{align*}
    G_{i, j}^{\vec{\theta}}(s) &:= \sum_{\vec{a}\in\mathcal{A}} \pi_{\vec{\theta}}(\vec{a}|s) (\nabla_{\theta_i} \log \pi_{\theta_i}(a_i|s)) Q_j^{\vec{\theta}}(s, \vec{a}) \\
    &= \mathbb{E}_{\vec{a} \sim \vec{\pi}_\theta} [ \nabla_{\theta_i} \log \pi_{\theta_i}(a_i|s) Q_j^{\vec{\theta}}(s, \vec{a}) ]
\end{align*}

With this, our recursive equation can be simplified into

$$\nabla_{\theta_i} V_j^{\vec{\theta}}(s) = G_{i,j}^{\vec{\theta}}(s) + \gamma \sum_{\vec{a}\in\mathcal{A}} \pi_{\vec{\theta}}(\vec{a}|s) \sum_{s'} \mathbb{P}(s'|s, \vec{a}) \nabla_{\theta_i} V_j^{\vec{\theta}}(s')$$

\textbf{2. Define Bellman operator for the value function gradient}

This equation defines a fixed-point relationship. Let $g_j: S \to \mathbb{R}^{|\theta_j|}$ be a vector field of gradients. We can define the Bellman operator for the value function gradient, $\mathfrak{T}_{i, j}^{\vec{\theta}}$, as
$$(\mathfrak{T}_{i, j}^{\vec{\pi}} g_j)(s) := G_{i,j}^{\vec{\theta}}(s) + \gamma \mathbb{E}_{\vec{a} \sim \pi_{\vec{\theta}}(\cdot|s),\; s' \sim P(\cdot|s,\vec{a})} [g_j(s')]$$

The gradient we are looking for, $\nabla_{\theta_i} V_j^{\vec{\theta}}$, is a fixed point of this operator
$$\nabla_{\theta_i} V_j^{\vec{\theta}} = \mathfrak{T}_{i, j}^{\vec{\theta}} (\nabla_{\theta_i} V_j^{\vec{\theta}})$$

\textbf{3. Policy Gradient Theorem for Markov Games}

The previous operator $\mathfrak{T}_{i, j}^{\vec{\theta}}$ defined a contraction mapping on the state-value function gradient space. By the \textbf{Banach-Picard fixed-point theorem}, it has a unique fixed point. This proves that the gradient vector field $\nabla_{\theta_i} V_j^{\vec{\theta}}$ is unique. 
Moreover, the \textbf{Banach-Picard fixed-point theorem} also states that for any sequence $(g_n)_{n\in\mathbb{N}}$ where $g_{n+1}=\mathfrak{T}_{i, j}^{\vec{\theta}}(g_n)$ and $g_0:\mathcal{S}\rightarrow\mathbb{R}^{|\theta_j|}$, $(g_n)_{n\in\mathbb{N}}$ converges $\nabla_{\theta_i}V_j^{\vec{\theta}}$.

We can find the solution by unrolling the recursion and starting with $g_0=0$. As $g_n$ converges to $\nabla_{\theta_i} V_j^{\vec{\theta}}$, the following equality holds 
\begin{equation*}
    \nabla_{\theta_i} V_j^{\vec{\theta}}(s_0)=g_\infty(s_0)=\underset{\tau\sim \mathbb{P}^{\vec{\theta}}}{\mathbb{E}} \left[ \sum_{t=0}^{\infty} \gamma^t G_{i,j}^{\vec{\theta}}(s_t) \;\middle|\; s_0\right]
\end{equation*}

Substituting the definition of $G_{i,j}^{\vec{\theta}}(s_t)$:
$$\nabla_{\theta_i} V_j^{\vec{\pi}}(s_0) = \underset{\tau\sim \mathbb{P}^{\vec{\theta}}}{\mathbb{E}}\left[ \sum_{t=0}^{\infty} \gamma^t \nabla_{\theta_i} \log \pi_{\theta_i}(a_{i,t}|s_t) Q_j^{\vec{\theta}}(s_t, \vec{a}_t) \;\middle|\; s_0\right]$$

We can simplify this equation and inject the expression of $\nabla_{\theta_i} V_j^{\vec{\pi}}(s_0)$ into the former gradient, we get the \textbf{Fair Policy Gradient Theorem}

\begin{equation*}
    \nabla_{\theta_i}J_i(\vec{\theta})=\underset{\tau\sim \mathbb{P}^{\vec{\theta}}}{\mathbb{E}}\left[ \sum_{t=0}^{\infty} \gamma^t \nabla_{\theta_i} \log \pi_{\theta_i}(a_{i,t}|s_t) \left(\sum_{j=1}^N c_i(j)\frac{Q_j^{\vec{\theta}}(s_t, \vec{a}_t)}{V_j^{\vec{\theta}}(s_0)}\right)\right]. \\
\end{equation*}
\end{proof}


One common technique in reinforcement learning to stabilize learning is to add a baseline to the gradient. The same technique can be applied in our case. 
\begin{theorem}
    (Fair Altruistic Advantage Policy Gradient). 
    The gradient can be rewritten 
    \begin{equation*}
        \nabla_{\theta_i}J_i(\vec{\theta})=\underset{\tau\sim\mathbb{P}^{\vec{\pi}_\theta}}{\mathbb{E}}\left[\sum_{t=0}^\infty A^{F, \vec{\theta}}_i(s_t,\vec{a}_t, s_0)\nabla_{\theta_i}\log\pi_{\theta_i}(a_{i,\, t} \;|\; s_t)\right]
    \end{equation*}
    where $A^{F,\vec{\theta}}_i(s,\vec{a}, s')=\sum_{j=1}^N c_i(j)\frac{A^{\vec{\theta}}_j(s, \, \vec{a})}{V^{\vec{\theta}}_j(s')}$ is the fair altruistic advantage function of agent $i$.
\end{theorem}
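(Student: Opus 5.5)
The plan is to start from the Fair Policy Gradient Theorem proved just above and show that subtracting a state-dependent baseline from each $Q_j^{\vec{\theta}}$ leaves the expectation unchanged. Concretely, I would write $A_j^{\vec{\theta}}(s,\vec{a}) = Q_j^{\vec{\theta}}(s,\vec{a}) - V_j^{\vec{\theta}}(s)$. It then suffices to prove that, for every agent $j$,
\begin{equation*}
\underset{\tau\sim \mathbb{P}^{\vec{\theta}}}{\mathbb{E}}\left[ \sum_{t=0}^{\infty} \gamma^t\, c_i(j)\frac{V_j^{\vec{\theta}}(s_t)}{V_j^{\vec{\theta}}(s_0)}\, \nabla_{\theta_i} \log \pi_{\theta_i}(a_{i,t}|s_t)\right] = 0 .
\end{equation*}
Summing over $j$ and subtracting this from the previous gradient formula then gives the stated expression with $A^{F,\vec{\theta}}_i(s_t,\vec{a}_t,s_0)$. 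I would keep the discount weight $\gamma^t$ exactly as in the preceding theorem and read the statement with the same weighting, since the argument does not interact with it. I would also note that $V_j^{\vec{\theta}}(s_0)>0$, because rewards are positive; this is already needed for $\log V_j$ to be defined.

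\textbf{Step 1: exchange sum and expectation.} The state and action spaces are finite and rewards are bounded by $R_{\max}$, so $|V_j^{\vec{\theta}}(s_t)| \le R_{\max}/(1-\gamma)$. Since there are finitely many states, $1/V_j^{\vec{\theta}}(s_0)$ is bounded. Under the same mild conditions as before, I assume the score $\nabla_{\theta_i}\log\pi_{\theta_i}$ is bounded on the finite set of state--action pairs. The series is then dominated by a constant times $\sum_t\gamma^t$, so Fubini or dominated convergence allows me to write the expectation as $\sum_t \gamma^t\, \mathbb{E}[\cdot]$ and treat each time step separately.

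\textbf{Step 2: condition on the history up to $s_t$.} Let $h_t=(s_0,\vec{a}_0,\dots,s_t)$. The factor $c_i(j)V_j^{\vec{\theta}}(s_t)/V_j^{\vec{\theta}}(s_0)$ is a function of $h_t$ only. By the construction of $\mathbb{P}^{\vec{\theta}}$ (Ionescu-Tulcea), the joint action $\vec{a}_t$ given $h_t$ is distributed as $\pi_{\vec{\theta}}(\cdot|s_t)=\prod_k\pi_{\theta_k}(\cdot|s_t)$. The tower property then reduces the claim to
\begin{equation*}
\mathbb{E}_{\vec{a}\sim\pi_{\vec{\theta}}(\cdot|s_t)}\left[\nabla_{\theta_i}\log\pi_{\theta_i}(a_i|s_t)\right]=\sum_{a_i}\nabla_{\theta_i}\pi_{\theta_i}(a_i|s_t)=\nabla_{\theta_i}\sum_{a_i}\pi_{\theta_i}(a_i|s_t)=0.
\end{equation*}
To get the first equality, I would marginalise out the other agents' actions, whose probabilities sum to one. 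The second equality holds because the action set is finite.

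\textbf{Main obstacle.} The delicate point is that this baseline also depends on $s_0$ through $1/V_j^{\vec{\theta}}(s_0)$, rather than on $s_t$ alone as in the standard single-agent argument. The step to check carefully is therefore that $s_0$ belongs to the conditioning history $h_t$, so the whole baseline is $h_t$-measurable and independent of $a_{i,t}$ given $h_t$. The integrability check in Step 1 is the other place where care is needed, because it is what justifies the termwise conditioning over the infinite horizon.
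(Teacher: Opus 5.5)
Your proposal is correct and follows the paper's proof: both subtract the baseline $\sum_{j}c_i(j)V_j^{\vec{\theta}}(s_t)/V_j^{\vec{\theta}}(s_0)$ from the fair policy gradient and show it has zero mean via $\sum_{a_i}\nabla_{\theta_i}\pi_{\theta_i}(a_i|s_t)=\nabla_{\theta_i}1=0$, after marginalising out the other agents' actions. Your conditioning on the full history $h_t$ (which handles the $s_0$-dependence of the baseline) and your dominated-convergence check with the $\gamma^t$ weight kept make rigorous two points the paper passes over: it states the zero-mean lemma only for $f:\mathcal{S}\to\mathbb{R}$, and it omits $\gamma^t$ from the displayed statement.
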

\begin{proof}

The proof relies on showing that the expected value of the gradient term associated with any baseline function of $s_t$ is zero. Let us consider the expectation of this baseline term at a single timestep $t$ for a given state $s_t$. The expectation is over the joint action $\vec{a}_t$ drawn from the joint policy $\vec{\pi}(\cdot|s_t)$.

We want to show that for any $f:\mathcal{S}\rightarrow\mathbb{R}$,
\begin{equation*}
    \mathbb{E}_{\vec{a}_t \sim \pi_{\vec{\theta}}(\cdot|s_t)} \left[ \nabla_{\theta_i} \log \pi_{\theta_i}(a_{i,t}|s_t) f(s_t) \right] = 0
\end{equation*}

Notice that $f(s_t)$ does not depend on $\vec{a}_t$ so we can pull it out of the expectation,
\begin{equation*}
    f(s_t)\cdot \mathbb{E}_{\vec{a}_t \sim \pi_{\vec{\theta}}(\cdot|s_t)} \left[ \nabla_{\theta_i} \log \pi_{\theta_i}(a_{i,t}|s_t)\right].
\end{equation*}

The inner function in the expectation depends only on the action of agent $i$. Then, we can remove the other actions' influence from the expectation,
\begin{equation*}
    f(s_t)\cdot \mathbb{E}_{a_{i,t} \sim \pi_{\theta_i}(\cdot|s_t)} \left[ \nabla_{\theta_i} \log \pi_{\theta_i}(a_{i,t}|s_t)\right].
\end{equation*}

Let us expand the expectation
\begin{equation*}
    f(s_t)\sum_{a_{i,t} \in \mathcal{A}_i} \pi_{\theta_i}(a_{i,t}|s_t) \nabla_{\theta_i} \log \pi_{\theta_i}(a_{i,t}|s_t).
\end{equation*}

Using the log-derivative trick 
\begin{align*}
    \sum_{a_{i,t}} \pi_{\theta_i}(a_{i,t}|s_t) \nabla_{\theta_i} \log \pi_{\theta_i}(a_{i,t}|s_t) 
    &= \sum_{a_{i,t}} \nabla_{\theta_i} \pi_{\theta_i}(a_{i,t}|s_t)\\
    &= \nabla_{\theta_i} \sum_{a_{i,t}} \pi_{\theta_i}(a_{i,t}|s_t)\\
    &= \nabla_{\theta_i} (1)\\
    &=0.
\end{align*}

Since the expected value of the baseline term is zero, subtracting it from the term inside the expectation in the original policy gradient theorem does not change the total expectation. In this work, we use the baseline $f(s_t)=\sum_{j=1}^Nc_i(j)\frac{V_j^{\vec{\theta}}(s_t)}{V_j^{\vec{\theta}}(s_0)}$. We then get the advantage form of the fair policy gradient
\begin{align*}
    \nabla_{\theta_i}J_i(\vec{\theta})&=\underset{\tau\sim\mathbb{P}^{\vec{\theta}}}{\mathbb{E}}\left[\sum_{t=0}^\infty A^{F,\vec{\theta}}_i(s_t,\vec{a}_t, s_0)\nabla_{\theta_i}\log\pi_{\theta_i}(a_{i,\, t} \;|\; s_t)\right] \\
\end{align*}
\end{proof}

In particular, when $\alpha=1$, all agents try to maximize the same objective 
\begin{equation*}
    J_i(\vec{\theta})=\underset{s_0\sim\rho_0}{\mathbb{E}}\left[ V^{\vec{\theta}, \text{ Prop}}(s_0)\right]
\end{equation*}
and the policy gradient becomes 
\begin{equation*}
    \nabla_{\theta_i}J_i(\vec{\theta})=\underset{\tau\sim\mathbb{P}^{\vec{\pi}_\theta}}{\mathbb{E}}\left[\sum_{t=0}^\infty A^{F, \vec{\theta}}(s_t,\vec{a}_t, s_0)\nabla_{\theta_i}\log\pi_{\theta_i}(a_{i,\, t} \;|\; s_t)\right]
\end{equation*}
where $A^{F,\vec{\theta}}(s,\vec{a}, s')=\sum_{j=1}^N\frac{A^{\vec{\theta}}_j(s, \, \vec{a})}{V^{\vec{\theta}}_j(s')}$ is the fair advantage function.

From this policy gradient theorem, we can derive multi-agent Actor-Critic algorithms that learn fair policies. We try to validate our method by comparing measured fairness and efficiency of multi-agent systems in a well-known environment. 

\section{Experiments}
In this section, we present the experiments. In a nutshell, we study the effects of the fair objective on the agents' performances and the group's fairness under various altruism level in a popular mixed-motive game called 'CleanUp'.
\subsection{Environment}

The 'CleanUp' environment was introduced in \cite{hughes2018inequityaversionimprovescooperation} and is now part of  Melting Pot 2.0~\cite{melting_pot}. It is a commonly used multi-agent substrate designed to explore complex social dilemmas. It is built as a resource management game where multiple agents must cooperate to maintain a renewable resource pool. In 'CleanUp', agents have to harvest as many apples as they can. Agents only gain an immediate positive reward for harvesting one apple. 

\begin{figure}[h]
  \centering
  \includegraphics[width=0.8\linewidth]{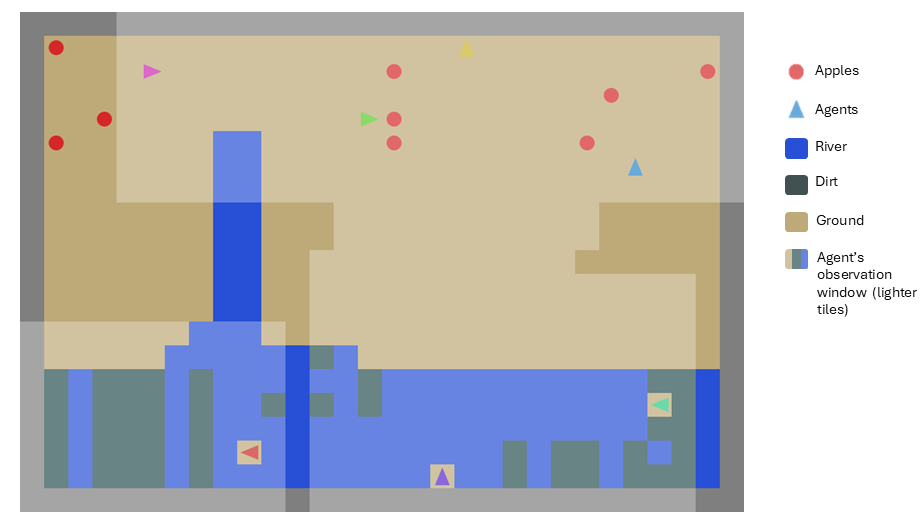}
  \caption{The CleanUp environment. Agents' observation window is represented in lighter tiles.}
  \label{fig:cleanup}
  \Description{A view of CleanUp environment. It consists on a closed environment with a river and a ground. Apples appears randomly based on how clean the river is.}
\end{figure}

\begin{figure*}[h]
  \centering
  \includegraphics[width=1\linewidth]{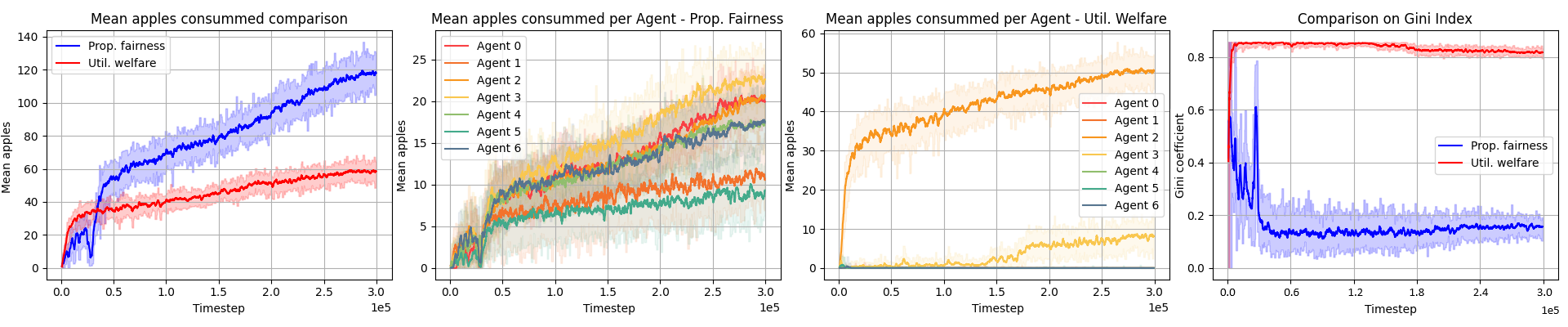}
  \caption{Comparison between Prop. Fairness and Util. Welfare objective trained with MAPPO in the fully cooperative setting ($\alpha=1$). Solid lines is obtained by averaging the metric over a rolling window of 50 runs. Shaded areas represent the min and the max of the respective metric over the same rolling window.  }
  \label{fig:prop_util}
  \Description{Comparison between Prop. Fairness and Util. Welfare objective trained with MAPPO in the fully cooperative setting ($\alpha=1$). There are four curves, comparison of Mean Apples consumed, comparison Mean Apples consumed per agent for both objectives and Gini index comparison. }
\end{figure*}

However, the growth rate of apples depends on how clean the river is. Whenever the pollution level becomes too high, the growth rate drops to zero and no additional apples can appear. To prevent this, agents have to clean the river and reduce pollution, benefiting all players. Thus, agents have to sacrifice short-term gains for long-term rewards as they cannot harvest apples while cleaning the river at the same time. This setup forces agents to navigate a classic social dilemma, balancing the selfish incentive of individual reward maximization (harvesting) against the collective necessity of long-term sustainability (cleaning), thereby testing their capacity for cooperation and pro-social behavior.

\subsection{Setup}
The environment is setup with partial observability. Agents can only view a part of the world consisting of a squared window of size $11\times 11$ pixels around their position. In this work, we instantiate any environment with 7 agents with random initial positions.

Our codebase\footnote{Code available at \url{https://github.com/AkuBrains/altruistic-fair-MARL/}} leverages the Jax-based 'CleanUp' environment provided by the SocialJax framework~\cite{social_jax}. We employ decentralized Actor-Critic algorithms, where each agent is modeled by its own set of neural networks, ensuring non-shared parameters (i.e., independent learning). This non-shared parameterization allows us to simulate a broader range of strategy profiles and better capture the full complexity of the CleanUp social dilemma. In particular, we use two convolutional layers to handle the spatial nature of observations followed by two fully connected layers. For each agent, the actor network represents the parameterized policy while the critic network estimates the individual state-value function.

We ran our training algorithms over 10 parallel environments. The episode length is set to 100 timesteps. We keep the episode length quite low on purpose to study the impact of the initial position on the learning process. Players spawned close to the apple area have an initial advantage over those closer to the river even with random policies. The total timestep is set to $3\times 10^5$. The learning rate is set to 0.001 for PPO-like algorithms~\cite{yu_surprising_2022} and 0.01 for A2C-like algorithms, with a decay schedule applied on both to reach a final value of 0.00001. In PPO-liked algorithms, we make advantage of the entropy regularization to improve exploration.

In our experiment, we study the performance of the following algorithms.
\paragraph{Fair MAA2C}
It is a straight forward application of the policy gradient theorem. Each agent's actor is updated using the individual Actor's loss given by $\mathbb{E}_t[A^{F, \vec{\theta}}_{i, t}\log\pi_{\theta_i}(a_{i,\, t} \;|\; s_t)]$.
\paragraph{Fair MAPPO} 
It is derived from MAPPO. We keep the core idea of MAPPO and we replace the classical advantage function with the fair advantage function in the PPO loss despite mathematical unaccuracies.

To assess the performance of our algorithms, we use two metrics. The first one evaluates the efficiency and is defined by the total number of apples consumed over an episode. The second one assesses the fairness among the agents and is defined by the Gini coefficient $Gini_t$ over the apples consumed, where 
\begin{equation*}
    Gini_t(c_0, \dots, c_n) = \frac{\sum_{i\in N}\sum_{j\in N}|c_i-c_j|}{2N\sum_{i\in N}c_i}
\end{equation*}
with $c_i$ being the number of apples consumed by agent $i$ in an episode.
The Gini coefficient navigates between $0.0$ and $1.0$. The lower it is, the fairer the distribution is, reaching $0.0$ for an even distribution. On the other hand, the higher it is, the more unequal the distribution is. 

We also study the fully cooperative setting in more depth and compare the Proportional Fairness (PF) objective and the Utilitarian Welfare (UW) objective.

\subsection{Results}


Figure \ref{fig:prop_util} presents a comparison between the PF objective and the UW one in a fully cooperative settings ($\alpha=1$). Surprisingly, the Prop. Fairness approach demonstrates superior overall efficiency. The total number of apples consumed by the group steadily increases, reaching approximately 120 by the end of the training. On the contrary, the Utilitarian driven method yields a significantly lower total apple consumption, leveled off near 40. 

After breaking down the overall consumption, all seven agents consume a relatively similar number of apples in the PF setting. Although there are minor variations, the trend lines for all agents are clustered together. No single agent is left behind, and all seem to contribute and benefit from the fair task which is reflected in the low Gini coefficient. In stark contrast, only two agents (2 and 3) learn to harvest apples while the others are cleaning the river or doing nothing in the UW setting. Thus, this outcomes in a very unequal situation characterized by a very high Gini index near $0.8$.

\begin{figure}[h]
  \centering
  \includegraphics[width=\linewidth]{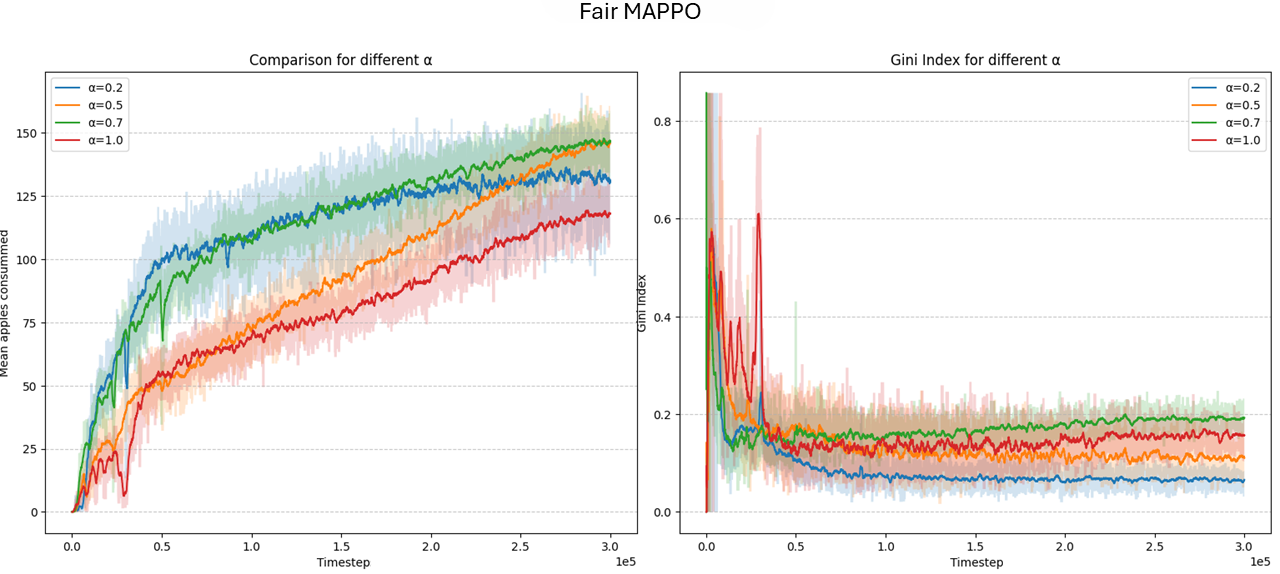}
  \caption{Performance of varying $\alpha$ in the CleanUp environment trained with Fair-MAPPO.}
  \label{fig:mappo}
  \Description{Results for Fair MAPPO. Compares the Mean Apples consumed and the Gini index during the training process for different altruism level.}
\end{figure}

In the Utilitarian settings, each agent benefits from others' reward signal independently of the action the agent takes. Therefore, the reward gained by the agent and its actions are not necessarily correlated. Agents are trapped into sub-optimal policies creating specialized role for each agent. When agents share the same parameters, they basically learn a single policy assuming natural cooperation and homogeneous behaviors. In the PF setting, the natural fairness brought by the PF objective encourages agents to explore further whenever an imbalance occurs. In the Fair Advantage function, the classical advantage function is normalized by the state-value function significantly reducing the weight of others' reward signal in the PF objective. 

Figure \ref{fig:mappo} shows the results for Fair MAPPO. All agent groups successfully learn to collect more apples over time, as indicated by the upward trend of all curves. Moreover, the Gini index of all experiments is quite low and all runs are clustered between $0.05$ and $0.2$. The group with $\alpha=0.7$ achieves the best overall performance, consuming the most apples by the end of the training. Against all odds, the worst group is the fully cooperative one achieving significantly lower scores in both overall performance and Gini index.

Intuitively, in the fully cooperative setting, the same phenomenon as in the UW setting exists but attenuated by the equity willingness of the PF objective. In contrast, when $\alpha<1$, agents are more inclined to follow their own objective and are prone harvesting apples autonomously. When $\alpha$ is too low ($\alpha=0.2$), the overall harvesting performance drops. In that case, agents are merely selfish meaning they are less inclined to coordinate themselves to clean the river.

\begin{figure}[h]
  \centering
  \includegraphics[width=\linewidth]{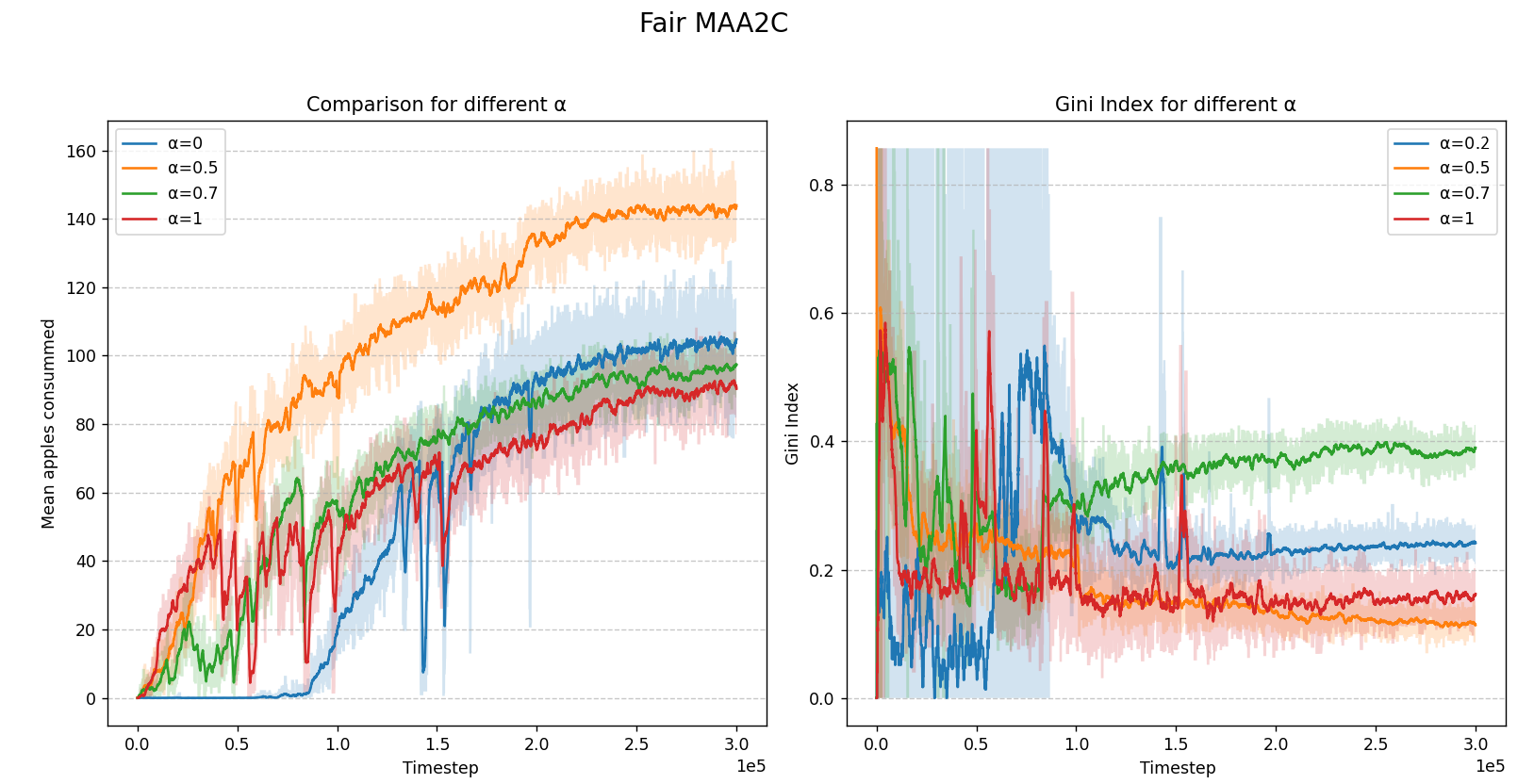}
  \caption{Performance of varying $\alpha$ in the CleanUp environment trained with Fair-MAA2C.}
  \label{fig:maa2c}
  \Description{Results for Fair MAA2C. Compares the Mean Apples consumed and the Gini index during the training process for different altruism level.}
\end{figure}

Figure \ref{fig:maa2c} shows the results for Fair MAA2C. All agent groups successfully learn to collect more apples over time, as indicated by the upward trend of all curves. However, the training is very noisy as we can notice from the several steep drops of the average apples consumed during the training. It is well known that MAA2C training process is very unstable and our Fair MAA2C exhibits the same behavior making the results difficult to interpret. Nonetheless, the best utilitarian performance and Gini index is obtained for $\alpha=0.5$ correlating Fair MAPPO results. In addition, Fair MAA2C seems to achieve lower mean apples consumed compared to its PPO counterpart except for $\alpha=0.5$ probably due to instability during the learning process. 


In short, selfish behavior can lead to a more equal outcome, but only because everyone is competing for the same individual rewards, potentially at the expense of the group's overall success. This highlights the fundamental trade-off between individual selfishness and collective goals. We also demonstrate that the fairness promoted by the PF objective enables sustainable cooperation, whereas a utilitarian approach fails to do so, particularly in heterogeneous groups.

\subsection{Discussions and Next steps}
Our algorithms still need to be improved regarding theoretical proofs. In particular, there is no convergence proof to any kind of equilibrium if it exists. Moreover, replacing the traditional Advantage function with the Fair Advantage function is inconsistent with the theory as we didn't prove the monotonic improvement theorem of Trust Region Policy Optimization in that case~\cite{trpo, HARL}. 

On the experimental part, it would be interesting to test our algorithms in more environments~\cite{melting_pot, social_jax} or realistic real-world simulation~\cite{rice_n} and create more diverse scenarii for CleanUp. With an apple regeneration rate of 0.05, the harvesting pressure exerted by only 7 agents is too low to create a significant imbalance between supply and demand. An pertinent scenario would be to increase the number of agents and decrease the regeneration rate to create a situation of resource scarcity, where demand is greater than supply.

Another way to improve our experiment in the Utilitarian setting would be to use the most recent state-of-the-art algorithms like HAPPO or HAA2C~\cite{HAML}.

\section{Conclusion}
This work introduces a novel framework that integrates Proportional Fairness with altruism to foster more equitable cooperation in multi-agent systems. We proposed a fair altruistic utility and extended it to sequential settings by defining a Fair Altruistic Markov Game, from which we derived novel fair policy gradient algorithms.

Our experiments in the 'CleanUp' social dilemma demonstrate that this approach not only achieves significantly fairer outcomes, as measured by the Gini index, but also leads to superior overall efficiency compared to the traditional utilitarian welfare objective with heterogeneous agents. By effectively balancing individual incentives with collective well-being, our method prevents agents from adopting highly specialized and inequitable roles, thus paving the way for more stable and sustainable cooperation.






\bibliographystyle{ACM-Reference-Format} 
\bibliography{AAMAS}

@article{selfishness_SR,
	title = {Selfishness {Level} {Induces} {Cooperation} in {Sequential} {Social} {Dilemmas}},
	language = {en},
	author = {Roesch, Stefan and Leonardos, Stefanos and Du, Yali},
}

@misc{simir_to_us_but_not_deep,
	title = {Achieving {Fairness} in {Multi}-{Agent} {Markov} {Decision} {Processes} {Using} {Reinforcement} {Learning}},
	url = {http://arxiv.org/abs/2306.00324},
	doi = {10.48550/arXiv.2306.00324},
	urldate = {2025-10-06},
	publisher = {arXiv},
	author = {Ju, Peizhong and Ghosh, Arnob and Shroff, Ness B.},
	month = jun,
	year = {2023},
	note = {arXiv:2306.00324 [cs]},
}

@book{ marl-book,
  author = {Stefano V. Albrecht and Filippos Christianos and Lukas Sch\"afer},
  title = {Multi-Agent Reinforcement Learning: Foundations and Modern Approaches},
  publisher = {MIT Press},
  year = {2024},
  url = {https://www.marl-book.com}
}

@article{nsw_def,
	title = {The {Nash} {Social} {Welfare} {Function}},
	volume = {47},
	issn = {00129682},
	url = {https://www.jstor.org/stable/1914191?origin=crossref},
	doi = {10.2307/1914191},
	language = {en},
	number = {2},
	urldate = {2025-10-06},
	journal = {Econometrica},
	author = {Kaneko, Mamoru and Nakamura, Kenjiro},
	month = mar,
	year = {1979},
	pages = {423},
}

@article{weighted_SR,
	title = {Towards {Fair} and {Equitable} {Policy} {Learning} in {Cooperative} {Multi}-{Agent} {Reinforcement} {Learning}},
	language = {en},
	author = {Siddique, Umer and Li, Peilang and Cao, Yongcan},
	year = {2024},
}

@misc{melting_pot,
	title = {Melting {Pot} 2.0},
	url = {http://arxiv.org/abs/2211.13746},
	doi = {10.48550/arXiv.2211.13746},
	urldate = {2025-10-06},
	publisher = {arXiv},
	author = {Agapiou, John P. and Vezhnevets, Alexander Sasha and Duéñez-Guzmán, Edgar A. and Matyas, Jayd and Mao, Yiran and Sunehag, Peter and Köster, Raphael and Madhushani, Udari and Kopparapu, Kavya and Comanescu, Ramona and Strouse, D. J. and Johanson, Michael B. and Singh, Sukhdeep and Haas, Julia and Mordatch, Igor and Mobbs, Dean and Leibo, Joel Z.},
	month = oct,
	year = {2023},
	note = {arXiv:2211.13746 [cs]},
}

@article{rice_n,
	title = {{AI} for {Global} {Climate} {Cooperation}: {Modeling} {Global} {Climate} {Negotiations}, {Agreements}, and {Long}-{Term} {Cooperation} in {RICE}-{N}},
	issn = {1556-5068},
	shorttitle = {{AI} for {Global} {Climate} {Cooperation}},
	url = {https://www.ssrn.com/abstract=4189735},
	doi = {10.2139/ssrn.4189735},
	language = {en},
	urldate = {2025-10-06},
	journal = {SSRN Electronic Journal},
	author = {Zhang, Tianyu and Williams, Andrew and Phade, Soham and Srinivasa, Sunil and Zhang, Yang and Gupta, Prateek and Bengio, Yoshua and Zheng, Stephan},
	year = {2022},
}

@misc{social_jax,
	title = {{SocialJax}: {An} {Evaluation} {Suite} for {Multi}-agent {Reinforcement} {Learning} in {Sequential} {Social} {Dilemmas}},
	shorttitle = {{SocialJax}},
	url = {http://arxiv.org/abs/2503.14576},
	doi = {10.48550/arXiv.2503.14576},
	urldate = {2025-10-06},
	publisher = {arXiv},
	author = {Guo, Zihao and Shi, Shuqing and Willis, Richard and Tomilin, Tristan and Leibo, Joel Z. and Du, Yali},
	month = may,
	year = {2025},
	note = {arXiv:2503.14576 [cs]},
}

@article{nsw,
	title = {The {Unreasonable} {Fairness} of {Maximum} {Nash} {Welfare}},
	volume = {7},
	issn = {2167-8375, 2167-8383},
	url = {https://dl.acm.org/doi/10.1145/3355902},
	doi = {10.1145/3355902},
	language = {en},
	number = {3},
	urldate = {2025-10-06},
	journal = {ACM Transactions on Economics and Computation},
	author = {Caragiannis, Ioannis and Kurokawa, David and Moulin, Hervé and Procaccia, Ariel D. and Shah, Nisarg and Wang, Junxing},
	month = aug,
	year = {2019},
	pages = {1--32},
}

@inproceedings{reward_shaping_0,
	title = {Learning {Fairness} in {Multi}-{Agent} {Systems}},
	volume = {32},
	url = {https://proceedings.neurips.cc/paper_files/paper/2019/hash/10493aa88605cad5ab4752b04a63d172-Abstract.html},
	urldate = {2025-10-06},
	booktitle = {Advances in {Neural} {Information} {Processing} {Systems}},
	publisher = {Curran Associates, Inc.},
	author = {Jiang, Jiechuan and Lu, Zongqing},
	year = {2019},
}

@misc{yu_surprising_2022,
	title = {The {Surprising} {Effectiveness} of {PPO} in {Cooperative}, {Multi}-{Agent} {Games}},
	url = {http://arxiv.org/abs/2103.01955},
	doi = {10.48550/arXiv.2103.01955},
	urldate = {2025-10-06},
	publisher = {arXiv},
	author = {Yu, Chao and Velu, Akash and Vinitsky, Eugene and Gao, Jiaxuan and Wang, Yu and Bayen, Alexandre and Wu, Yi},
	month = nov,
	year = {2022},
	note = {arXiv:2103.01955 [cs]},
}

@inproceedings{game_theory_harm_ratio,
	address = {San Luis Potosi Mexico},
	title = {Harm {Ratio}: {A} {Novel} and {Versatile} {Fairness} {Criterion}},
	isbn = {9798400712227},
	shorttitle = {Harm {Ratio}},
	url = {https://dl.acm.org/doi/10.1145/3689904.3694701},
	doi = {10.1145/3689904.3694701},
	language = {en},
	urldate = {2025-10-06},
	booktitle = {Proceedings of the 4th {ACM} {Conference} on {Equity} and {Access} in {Algorithms}, {Mechanisms}, and {Optimization}},
	publisher = {ACM},
	author = {Ebadian, Soroush and Freeman, Rupert and Shah, Nisarg},
	month = oct,
	year = {2024},
	pages = {1--14},
}

@incollection{alpha_altruistic,
	address = {Berlin, Heidelberg},
	title = {The {Robust} {Price} of {Anarchy} of {Altruistic} {Games}},
	volume = {7090},
	isbn = {978-3-642-25509-0 978-3-642-25510-6},
	url = {http://link.springer.com/10.1007/978-3-642-25510-6_33},
	language = {en},
	urldate = {2025-10-06},
	booktitle = {Internet and {Network} {Economics}},
	publisher = {Springer Berlin Heidelberg},
	author = {Chen, Po-An and De Keijzer, Bart and Kempe, David and Schäfer, Guido},
	editor = {Chen, Ning and Elkind, Edith and Koutsoupias, Elias},
	year = {2011},
	doi = {10.1007/978-3-642-25510-6_33},
	note = {Series Title: Lecture Notes in Computer Science},
	pages = {383--390},
}

@article{pf_intro,
	title = {Rate control for communication networks: shadow prices, proportional fairness and stability},
	language = {en},
	author = {Kelly, FP and Maulloo, AK and Tan, DKH},
}

@misc{reward_shaping_1,
	title = {Environmental-{Impact} {Based} {Multi}-{Agent} {Reinforcement} {Learning}},
	url = {http://arxiv.org/abs/2311.04240},
	doi = {10.48550/arXiv.2311.04240},
	urldate = {2025-10-08},
	publisher = {arXiv},
	author = {Alamiyan-Harandi, Farinaz and Ramazi, Pouria},
	month = nov,
	year = {2023},
	note = {arXiv:2311.04240 [cs]},
}

@misc{role_reward_shaping,
	title = {Role {Play}: {Learning} {Adaptive} {Role}-{Specific} {Strategies} in {Multi}-{Agent} {Interactions}},
	shorttitle = {Role {Play}},
	url = {http://arxiv.org/abs/2411.01166},
	doi = {10.48550/arXiv.2411.01166},
	urldate = {2025-10-08},
	publisher = {arXiv},
	author = {Long, Weifan and Wen, Wen and Zhai, Peng and Zhang, Lihua},
	month = nov,
	year = {2024},
	note = {arXiv:2411.01166 [cs]},
}

@inproceedings{reward_shaping_2,
	address = {Yokohama, Japan},
	title = {Balancing {Individual} {Preferences} and {Shared} {Objectives} in {Multiagent} {Reinforcement} {Learning}},
	isbn = {978-0-9992411-6-5},
	url = {https://www.ijcai.org/proceedings/2020/347},
	doi = {10.24963/ijcai.2020/347},
	language = {en},
	urldate = {2025-10-08},
	booktitle = {Proceedings of the {Twenty}-{Ninth} {International} {Joint} {Conference} on {Artificial} {Intelligence}},
	publisher = {International Joint Conferences on Artificial Intelligence Organization},
	author = {Durugkar, Ishan and Liebman, Elad and Stone, Peter},
	month = jul,
	year = {2020},
	pages = {2505--2511},
}

@misc{PF_study,
	title = {Proportionally {Fair} {Online} {Allocation} of {Public} {Goods} with {Predictions}},
	url = {http://arxiv.org/abs/2209.15305},
	doi = {10.48550/arXiv.2209.15305},
	urldate = {2025-10-08},
	publisher = {arXiv},
	author = {Banerjee, Siddhartha and Gkatzelis, Vasilis and Hossain, Safwan and Jin, Billy and Micha, Evi and Shah, Nisarg},
	month = sep,
	year = {2022},
	note = {arXiv:2209.15305 [cs]},
}

@misc{similar_to_us_but_no_policy_gradient,
	title = {Socially {Fair} {Reinforcement} {Learning}},
	url = {http://arxiv.org/abs/2208.12584},
	doi = {10.48550/arXiv.2208.12584},
	urldate = {2025-10-08},
	publisher = {arXiv},
	author = {Mandal, Debmalya and Gan, Jiarui},
	month = feb,
	year = {2023},
	note = {arXiv:2208.12584 [cs]},
}

@book{sutton_reinforcement_2014,
	address = {Cambridge, Massachusetts},
	edition = {Nachdruck},
	series = {Adaptive computation and machine learning},
	title = {Reinforcement learning: an introduction},
	isbn = {978-0-262-19398-6},
	shorttitle = {Reinforcement learning},
	language = {en},
	publisher = {The MIT Press},
	author = {Sutton, Richard S. and Barto, Andrew},
	year = {2014},
}

@misc{HAML,
	title = {Heterogeneous-{Agent} {Mirror} {Learning}: {A} {Continuum} of {Solutions} to {Cooperative} {MARL}},
	shorttitle = {Heterogeneous-{Agent} {Mirror} {Learning}},
	url = {http://arxiv.org/abs/2208.01682},
	doi = {10.48550/arXiv.2208.01682},
	urldate = {2025-10-08},
	publisher = {arXiv},
	author = {Kuba, Jakub Grudzien and Feng, Xidong and Ding, Shiyao and Dong, Hao and Wang, Jun and Yang, Yaodong},
	month = aug,
	year = {2022},
	note = {arXiv:2208.01682 [cs]},
}

@article{HARL,
	title = {Heterogeneous-{Agent} {Reinforcement} {Learning}},
	language = {en},
	author = {Zhong, Yifan and Kuba, Jakub Grudzien and Feng, Xidong and Hu, Siyi and Ji, Jiaming and Yang, Yaodong},
}

@misc{reward_shaping_3,
	title = {Skill-{Aligned} {Fairness} in {Multi}-{Agent} {Learning} for {Collaboration} in {Healthcare}},
	url = {http://arxiv.org/abs/2508.18708},
	doi = {10.48550/arXiv.2508.18708},
	urldate = {2025-10-08},
	publisher = {arXiv},
	author = {Ekpo, Promise Osaine and La, Brian and Wiener, Thomas and Agarwal, Saesha and Agrawal, Arshia and Gonzalez-Pumariega, Gonzalo and Molu, Lekan P. and Taylor, Angelique},
	month = sep,
	year = {2025},
	note = {arXiv:2508.18708 [cs]},
}

@misc{cooperation,
	title = {Multi-{Agent} {Actor}-{Critic} for {Mixed} {Cooperative}-{Competitive} {Environments}},
	url = {http://arxiv.org/abs/1706.02275},
	doi = {10.48550/arXiv.1706.02275},
	urldate = {2025-10-08},
	publisher = {arXiv},
	author = {Lowe, Ryan and Wu, Yi and Tamar, Aviv and Harb, Jean and Abbeel, Pieter and Mordatch, Igor},
	month = mar,
	year = {2020},
	note = {arXiv:1706.02275 [cs]},
}

@article{tragedy_of_common,
  title={The tragedy of the commons: the population problem has no technical solution; it requires a fundamental extension in morality.},
  author={Hardin, Garrett},
  journal={science},
  volume={162},
  number={3859},
  pages={1243--1248},
  year={1968},
  publisher={American Association for the Advancement of Science}
}

@article{social_dilemma,
author = {Dawes, Robyn M. and Messick, David M.},
title = {Social Dilemmas},
journal = {International Journal of Psychology},
volume = {35},
number = {2},
pages = {111-116},
doi = {https://doi.org/10.1080/002075900399402},
url = {https://onlinelibrary.wiley.com/doi/abs/10.1080/002075900399402},
eprint = {https://onlinelibrary.wiley.com/doi/pdf/10.1080/002075900399402},
year = {2000}
}

@misc{trpo,
      title={Trust Region Policy Optimization}, 
      author={John Schulman and Sergey Levine and Philipp Moritz and Michael I. Jordan and Pieter Abbeel},
      year={2017},
      eprint={1502.05477},
      archivePrefix={arXiv},
      primaryClass={cs.LG},
      url={https://arxiv.org/abs/1502.05477}, 
}

@misc{hughes2018inequityaversionimprovescooperation,
      title={Inequity aversion improves cooperation in intertemporal social dilemmas}, 
      author={Edward Hughes and Joel Z. Leibo and Matthew G. Phillips and Karl Tuyls and Edgar A. Duéñez-Guzmán and Antonio García Castañeda and Iain Dunning and Tina Zhu and Kevin R. McKee and Raphael Koster and Heather Roff and Thore Graepel},
      year={2018},
      eprint={1803.08884},
      archivePrefix={arXiv},
      primaryClass={cs.NE},
      url={https://arxiv.org/abs/1803.08884}, 
}

@book{10.7551/mitpress/2954.001.0001,
    author = {Moulin, Hervé},
    title = {Fair Division and Collective Welfare},
    publisher = {The MIT Press},
    year = {2003},
    month = {01},
    isbn = {9780262280297},
    doi = {10.7551/mitpress/2954.001.0001},
    url = {https://doi.org/10.7551/mitpress/2954.001.0001},
}


\clearpage
\onecolumn

\appendix

\section*{Appendix}
\section{Methodology}  \label{app:appendix}
\begin{theorem}
    The altruism level of social dilemma is 
    \begin{equation}
        \alpha_G=
        \begin{cases}
            0 & \text{if } T \le R, \\
            \frac{\log T-\log R}{\log R-\log S} & \text{if } T > R.
        \end{cases}
    \end{equation}
\end{theorem}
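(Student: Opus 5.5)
We work in the transformed game $G'(\alpha)$ with $R'_1=\log R_1+\alpha\log R_2$ and $R'_2=\log R_2+\alpha\log R_1$, and assume all payoffs are strictly positive. Throughout, the social optimum of $G$ is taken to be $(C,C)$, which is the outcome the standard social-dilemma inequalities single out as collectively preferred.

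\textbf{Step 1: reduce to one inequality.} Because the game is symmetric, $(C,C)$ is a pure Nash equilibrium of $G'(\alpha)$ exactly when player 1 has no profitable deviation to $D$ against $C$. That condition is
\[
R'_1(C,C)\ge R'_1(D,C), \quad\text{i.e.}\quad (1+\alpha)\log R\ge \log T+\alpha\log S .
\]
Rearranging gives
\[
\alpha(\log R-\log S)\ge \log T-\log R .
\]
The altruism level is then the infimum of the set of $\alpha\ge 0$ satisfying this linear inequality.

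\textbf{Step 2: fix the sign of the coefficient.} We need $\log R-\log S>0$, that is $R>S$, which is one of the defining inequalities of a social dilemma. Dividing through, the feasible set becomes
\[
\Bigl\{\alpha\ge 0 : \alpha\ge \tfrac{\log T-\log R}{\log R-\log S}\Bigr\}.
\]

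\textbf{Step 3: case split on $T$ versus $R$.}
\begin{itemize}
\item If $T\le R$ (e.g.\ Stag Hunt), the right-hand side is $\le 0$. Every $\alpha\ge 0$ is feasible, so the infimum over $\mathbb{R}_+$ is $0$.
\item If $T>R$ (Prisoner's Dilemma or Chicken), the threshold is strictly positive. The feasible set is the closed ray $\bigl[\tfrac{\log T-\log R}{\log R-\log S},\infty\bigr)$, so the infimum equals the threshold and is attained.
\end{itemize}
This yields exactly the two-case formula in the statement.

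\textbf{Main obstacle: the definition, not the algebra.} The definition of $\alpha$-altruism requires that some pure Nash equilibrium of $G'(\alpha)$ coincide with a social optimum of $G$. Two points need care.
\begin{itemize}
\item \emph{The social optimum must be $(C,C)$ under the log welfare.} With log-welfare $2\log R$ versus $\log T+\log S$, this requires $R^2\ge TS$. The stated condition $2R>T+S$ is a comparison of sums, so it does not by itself give $R^2\ge TS$. We therefore state $R^2\ge TS$ as a hypothesis and note it holds for standard matrices; this is the same condition $TS\le R^2$ that also makes $\alpha_G\le 1$.
\item \emph{No other equilibrium can do the job first.} If $(D,D)$ or $(C,D)$ were a social optimum and an equilibrium for some smaller $\alpha$, the infimum could drop below our threshold. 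Assuming $(C,C)$ is the unique social optimum rules this out, which is the intended reading.
\end{itemize}
With those assumptions made explicit, the remaining steps are routine.
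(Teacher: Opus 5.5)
Your core argument matches the paper's. You reduce to the single deviation condition $(1+\alpha)\log R\ge\log T+\alpha\log S$, divide by $\log R-\log S>0$, and read off the infimum. The paper splits by Prisoner's Dilemma, Stag Hunt and Chicken rather than directly by $T\le R$ versus $T>R$. It also writes a strict inequality where your closed ray is the more accurate statement; the infimum is the same.

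The thing to fix is your claim that $2R>T+S$ ``does not by itself give $R^2\ge TS$.'' It does, because payoffs are assumed strictly positive. By AM--GM, $\sqrt{TS}\le \tfrac{T+S}{2}<R$, so $TS<R^2$. This is exactly what the paper proves in its appendix, by squaring $T+S<2R$ and using $T^2+S^2\ge 2TS$. So $R^2\ge TS$ is not an extra hypothesis, and $\alpha_G<1$ follows automatically.

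The same observation removes your second caveat. With $R>P$ and $R^2>TS$, the log-welfare $2\log R$ of $(C,C)$ strictly exceeds both $2\log P$ and $\log T+\log S$. Likewise, $2R$ exceeds $2P$ and $T+S$ under utilitarian welfare. Hence $(C,C)$ is the unique social optimum with no added assumption. ``Some pure Nash equilibrium of $G'(\alpha)$ is a social optimum'' is then equivalent to ``$(C,C)$ is a Nash equilibrium of $G'(\alpha)$.''

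With these two hypotheses dropped, your proof establishes the statement exactly as given.
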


At the end of the proof, we need to ensure the consistencyness with the definition of the altruism level $\alpha$. Namely, the ratio $\frac{\log T-\log R}{\log R-\log S}$ should be lower than $1$
\begin{proof}
    Requiring the ratio $\frac{\log T-\log R}{\log R-\log S} < 1$ is equivalent to the constraint $TS \leq R^2$.

    Starting from the third inequality of social dilemmas, we have 
    \begin{align*}
        T+S&<2R,\\
        T^2+2TS + S^2 &<4R^2,\\
        \frac{T^2+S^2}{4}+\frac{1}{2}TS&<R^2.
    \end{align*}

    A well-known inequality for any real values $T$ and $S$ is
    \begin{equation*}
        T^2 +S^2 \geq 2TS.
    \end{equation*}
    
    Then, we obtain the following inequality 
    \begin{align*}
        \frac{T^2+S^2}{4}+\frac{1}{2}TS&<R^2, \\
        \frac{2\cdot TS}{4} + \frac{1}{2}TS&\leq R^2,\\
        TS\leq R^2.
    \end{align*}
\end{proof}

\section{FAIR ALTRUISTIC MARKOV GAME}
Here, we provide a more complete proof for policy gradient and advantage policy gradient theorems.
\begin{proof}
Here, $V_j^{\vec{\theta}}$ is assumed to be differentiable with respect to any $\theta_i$.

Let's derive the objective for agent $j$,
\begin{align*}
    \nabla_{\theta_i}J_i(\vec{\theta})&=\nabla_{\theta_i}\underset{s_0\sim\rho_0}{\mathbb{E}}\left[\sum_{j=1}^N c_i(j)\log V_j^{\vec{\theta}}(s_0)\right] \\
    &=\underset{s_0\sim\rho_0}{\mathbb{E}}\left[\sum_{j=1}^N c_i(j)\nabla_{\theta_i}\left[\log V_j^{\vec{\theta}}(s_0)\right]\right] \\
    &=\underset{s_0\sim\rho_0}{\mathbb{E}}\left[\sum_{j=1}^N c_i(j)\frac{\nabla_{\theta_i}V_j^{\vec{\theta}}(s_0)}{V_j^{\vec{\theta}}(s_0)}\right]&& \text{(using the chain rule).} \\
\end{align*}

Then, we need to find the gradient $\nabla_{\theta_i}V_j^{\vec{\theta}}(s_0)$. To do so, we start from the Bellman equation, differentiate it. Then, we define a contraction mapping on the state value function space and apply the Banach-Picard Fixed-Point Theorem to get the formula of the gradient.

\textbf{1. Differentiate the Bellman equation.}

Let's recall the Bellman equation for the state-value function :
\begin{equation*}
    V_j^{\vec{\theta}}(s) = \sum_{\vec{a} \in \mathcal{A}} \pi_{\vec{\theta}}(\vec{a}\,|\, s) \left( r_j(s, \vec{a}) + \gamma \sum_{s' \in \mathcal{S}} \mathbb{P}(s' | s, \vec{a}) V_j^{\vec{\theta}}(s') \right)
\end{equation*}
Note that $r_i$, $P$, and policies $\pi_{\theta_j}$ for $j \neq i$ do not depend on $\theta_i$.
Then, the Bellman equation can be differentiate using the product rule :
\begin{align*}
    \nabla_{\theta_i}V_j^{\vec{\theta}}(s) 
    &= \sum_{\vec{a} \in \mathcal{A}} \nabla_{\theta_i}\pi_{\vec{\theta}}(\vec{a}\,|\, s) \cdot\left( r_j(s, \vec{a}) + \gamma \sum_{s' \in \mathcal{S}} \mathbb{P}(s' | s, \vec{a}) V_j^{\vec{\theta}}(s') \right) \\
    &\quad\quad +\,  \pi_{\vec{\theta}}(\vec{a}\,|\, s)\cdot\nabla_{\theta_i} \left( r_j(s, \vec{a}) + \gamma \sum_{s' \in \mathcal{S}} \mathbb{P}(s' | s, \vec{a}) V_j^{\vec{\theta}}(s') \right)\\
    &=\sum_{\vec{a} \in \mathcal{A}} \nabla_{\theta_i}\pi_{\vec{\theta}}(\vec{a}\,|\, s) \cdot\left( r_j(s, \vec{a}) + \gamma \sum_{s' \in \mathcal{S}} \mathbb{P}(s' | s, \vec{a}) V_j^{\vec{\theta}}(s') \right) \\
    &\quad\quad +\,  \pi_{\vec{\theta}}(\vec{a}\,|\, s)\cdot \gamma \sum_{s' \in \mathcal{S}} \mathbb{P}(s' | s, \vec{a}) \nabla_{\theta_i} V_j^{\vec{\theta}}(s')\\
\end{align*}

Let's analyze the first term in the sum.

\textbf{Term 1}: Unrolling the joint policy 
$$\nabla_{\theta_i} \pi_{\vec{\theta}}(\vec{a}|s) = \nabla_{\theta_i} \left( \pi_{\theta_i}(a_i|s) \prod_{k \neq i} \pi_{\theta_k}(a_k|s) \right) = \left( \nabla_{\theta_i} \pi_{\theta_i}(a_i|s) \right) \prod_{k \neq i} \pi_k(a_k|s)$$
Then, using the log-derivative trick, $\nabla f = f \cdot\nabla \log f$:
$$\nabla_{\theta_i} \pi_{\vec{\theta}}(\vec{a}|s)= \left( \pi_{\theta_i}(a_i|s) \nabla_{\theta_i} \log \pi_{\theta_i}(a_i|s) \right) \prod_{k \neq i} \pi_{\theta_k}(a_k|s) = \pi_{\vec{\theta}}(\vec{a}|s)\cdot \nabla_{\theta_i} \log \pi_{\theta_i}(a_i|s)$$
Also, recall the definition of the action-state-value function :
\begin{equation*}
     Q_j^{\vec{\theta}}(s,\, \vec{a})=r_j(s, \vec{a}) + \gamma \sum_{s' \in \mathcal{S}} \mathbb{P}(s' | s, \vec{a}) V_j^{\vec{\theta}}(s').
\end{equation*}
Therefore, the first term can be rewritten 
\begin{equation*}
    \pi_{\vec{\theta}}(\vec{a}|s)\cdot \nabla_{\theta_i} \log \pi_{\theta_i}(a_i|s)\cdot Q_j^{\vec{\theta}}(s,\, \vec{a})
\end{equation*}

Now, substitute this back into the expression for $\nabla_{\theta_i} V_j^{\vec{\theta}}(s)$:
$$\nabla_{\theta_i} V_j^{\vec{\theta}}(s) = \sum_{\vec{a}} \pi_{\vec{\theta}}(\vec{a}|s) \left[ (\nabla_{\theta_i} \log \pi_{\theta_i}(a_i|s)) Q_j^{\vec{\theta}}(s, \vec{a}) + \gamma \sum_{s'} \mathbb{P}(s'|s, \vec{a}) \nabla_{\theta_i} V_j^{\vec{\theta}}(s') \right]$$

Let's define a function $G_{i, j}^{\vec{\theta}}(s)$ which represents the immediate gradient component at state $s$:
$$G_{i, j}^{\vec{\theta}}(s) := \sum_{\vec{a}\in\mathcal{A}} \pi_{\vec{\theta}}(\vec{a}|s) (\nabla_{\theta_i} \log \pi_{\theta_i}(a_i|s)) Q_j^{\vec{\theta}}(s, \vec{a}) = \mathbb{E}_{\vec{a} \sim \vec{\pi}_\theta} [ \nabla_{\theta_i} \log \pi_{\theta_i}(a_i|s) Q_j^{\vec{\theta}}(s, \vec{a}) ]$$

With this, our recursive equation becomes:
$$\nabla_{\theta_i} V_j^{\vec{\theta}}(s) = G_{i,j}^{\vec{\theta}}(s) + \gamma \sum_{\vec{a}\in\mathcal{A}} \pi_{\vec{\theta}}(\vec{a}|s) \sum_{s'} \mathbb{P}(s'|s, \vec{a}) \nabla_{\theta_i} V_j^{\vec{\theta}}(s')$$

\textbf{2. Define Bellman operator for the value function gradient}

This equation defines a fixed-point relationship. Let $g_j: S \to \mathbb{R}^{|\theta_j|}$ be a vector field of gradients. We can define the Bellman operator for the value function gradient, $\mathfrak{T}_{i, j}^{\vec{\theta}}$, as:
$$(\mathfrak{T}_{i, j}^{\vec{\pi}} g_i)(s) := G_{i,j}^{\vec{\theta}}(s) + \gamma \mathbb{E}_{\vec{a} \sim \pi_{\vec{\theta}}(\cdot|s), s' \sim P(\cdot|s,\vec{a})} [g_j(s')]$$

The gradient we are looking for, $\nabla_{\theta_i} V_j^{\vec{\theta}}$, is a fixed point of this operator:
$$\nabla_{\theta_i} V_j^{\vec{\theta}} = \mathfrak{T}_{i, j}^{\vec{\theta}} (\nabla_{\theta_i} V_j^{\vec{\theta}})$$

\textbf{3. Application of the Banach-Picard Fixed-Point Theorem}

To prove that this operator has a unique fixed point, we need to show it is a contraction mapping in a complete metric space. Let's consider the space of all gradient vector fields $g_j: S \to \mathbb{R}^{|\theta_i|}$ equipped with the infinity norm:
$$\|g_j\|_{\infty} = \max_{s \in S} \|g_j(s)\|_2$$
where $\|\cdot\|_2$ is the standard Euclidean norm.

Let $g_1$ and $g_2$ be two gradient vector fields in this space. We examine the distance between their mappings under the operator:
\begin{align*}
    (\mathfrak{T}_{i, j}^{\vec{\theta}} g_1)(s) - (\mathfrak{T}_{i, j}^{\vec{\theta}} g_2)(s) &= \left( G_{i,j}^{\vec{\theta}}(s) + \gamma \mathbb{E}_{s', \vec{a}} [g_1(s')] \right) - \left( G_{i,j}^{\vec{\theta}}(s) + \gamma \mathbb{E}_{s', \vec{a}} [g_2(s')] \right) \\
    &= \gamma \mathbb{E}_{\vec{a} \sim \vec{\pi}(\cdot|s), s' \sim P(\cdot|s,\vec{a})} [g_1(s') - g_2(s')]
\end{align*}

Now, we take the norm:
$$\| (\mathfrak{T}_{i, j}^{\vec{\theta}} g_1)(s) - (\mathfrak{T}_{i, j}^{\vec{\theta}} g_2)(s) \|_2 = \gamma \left\| \sum_{s'\in\mathcal{S}} \left( \sum_{\vec{a}\in\mathcal{A}} \pi_{\vec{\theta}}(\vec{a}|s) \mathbb{P}(s'|s, \vec{a}) \right) (g_1(s') - g_2(s')) \right\|_2$$
Using the triangle inequality and properties of norms:
$$\le \gamma \sum_{s'\in\mathcal{S}} \left( \sum_{\vec{a}\in\mathcal{A}} \pi_{\vec{\theta}}(\vec{a}|s) \mathbb{P}(s'|s, \vec{a}) \right) \| g_1(s') - g_2(s') \|_2$$
The term in the parenthesis is the total probability of transitioning from $s$ to $s'$ under the joint policy $\pi_{\vec{\theta}}$, let's call it $p_{\vec{\theta}}(s'|s)$.
$$\le \gamma \sum_{s'} p_{\vec{\pi}}(s'|s) \| g_1(s') - g_2(s') \|_2$$
By definition of the infinity norm, $\| g_1(s') - g_2(s') \|_2 \le \| g_1 - g_2 \|_{\infty}$ for any $s'$.
$$\le \gamma \sum_{s'} P^{\vec{\pi}}(s'|s) \| g_1 - g_2 \|_{\infty}$$
Since probabilities sum to one, $\sum_{s'} p_{\vec{\theta}}(s'|s) = 1$:
$$\| (\mathfrak{T}_{i, j}^{\vec{\theta}} g_1)(s) - (\mathfrak{T}_{i, j}^{\vec{\theta}} g_2)(s) \|_2 \le \gamma \| g_1 - g_2 \|_{\infty}.$$

This inequality holds for all states $s \in \mathcal{S}$. Therefore, taking the maximum over all $s$:
$$\| \mathfrak{T}_{i, j}^{\vec{\theta}} g_1 - \mathfrak{T}_{i, j}^{\vec{\theta}} g_2 \|_{\infty} \le \gamma \| g_1 - g_2 \|_{\infty}$$

Since $\gamma \in [0, 1)$, the operator $\mathfrak{T}_{i, j}^{\vec{\theta}}$ is a contraction mapping. By the \textbf{Banach-Picard fixed-point theorem}, it has a unique fixed point. This proves that the gradient vector field $\nabla_{\theta_i} V_j^{\vec{\theta}}$ is unique.

\hfill

\textbf{4. Policy Gradient Theorem for Markov Games}

The fixed-point equation $\nabla_{\theta_i} V_j^{\vec{\theta}} = \mathfrak{T}_{i, j}^{\vec{\theta}} (\nabla_{\theta_i} V_j^{\vec{\theta}})$ gives us the final theorem. The \textbf{Banach-Picard fixed-point theorem} also states that for any sequence $g_{n+1}=\mathfrak{T}_{i, j}^{\vec{\theta}}g_n$ and $g_0:\mathcal{S}\rightarrow\mathbb{R}^{|\theta_j|}$, $(g_n)_{n\in\mathbb{N}}$ converges $\nabla_{\theta_i}V_j^{\vec{\theta}}$.

We can find the solution by unrolling the recursion by starting with $g_0=0$:
\begin{align*}
    g_1(s_0)&=(\mathfrak{T}_{i, j}^{\vec{\theta}}g_0)(s_0)=G_{i,j}^{\vec{\theta}}(s_0) \\
    g_2(s_0)&=(\mathfrak{T}_{i, j}^{\vec{\theta}}g_1)(s_0)=G_{i,j}^{\vec{\theta}}(s_0) +\gamma\underset{\substack{\vec{a}_0\sim \pi_{\vec{\theta}}(.|s_0) \\ s_1\sim P(.|s_0, \vec{a}_0)}}{\mathbb{E}} \left[G_{i,j}^{\vec{\theta}}(s_1)\right]\\
    g_3(s_0)&=(\mathfrak{T}_{i, j}^{\vec{\theta}}g_2)(s_0)=G_{i,j}^{\vec{\theta}}(s_0) +\gamma\underset{\substack{\vec{a}_0\sim \pi_{\vec{\theta}}(.|s_0) \\ s_1\sim P(.|s_0, \vec{a}_0)}}{\mathbb{E}} \left[G_{i,j}^{\vec{\theta}}(s_1) +\gamma\underset{\substack{\vec{a}_1\sim \pi_{\vec{\theta}}(.|s_1) \\ s_2\sim P(.|s_1, \vec{a}_1)}}{\mathbb{E}} \left[G_{i,j}^{\vec{\theta}}(s_2)\right]\right]\\
    &\vdots\\
    g_n(s_0)&=(\mathfrak{T}_{i, j}^{\vec{\theta}}g_{n-1})(s_0)=\underset{\tau\sim\mathbb{P}^{\vec{\theta}}}{\mathbb{E}}\left[\sum_{t=0}^{n-1}\gamma^t G_{i,j}^{\vec{\theta}}(s_t)\;\middle|\;s_0\right].
\end{align*}

Therefore, as $g_n$ converges to $\nabla_{\theta_i} V_j^{\vec{\theta}}$, the following equality holds 
\begin{equation*}
    \nabla_{\theta_i} V_j^{\vec{\theta}}(s_0)=g_\infty(s_0)=\underset{\tau\sim \mathbb{P}^{\vec{\theta}}}{\mathbb{E}} \left[ \sum_{t=0}^{\infty} \gamma^t G_{i,j}^{\vec{\theta}}(s_t) \;\middle|\; s_0\right]
\end{equation*}

Substituting the definition of $G_{i,j}^{\vec{\theta}}(s_t)$:
$$\nabla_{\theta_i} V_j^{\vec{\theta}}(s_0) = \underset{\tau\sim \mathbb{P}^{\vec{\theta}}}{\mathbb{E}} \left[ \sum_{t=0}^{\infty} \gamma^t \underset{\vec{a}_t \sim \pi_{\vec{\theta}}(\cdot|s_t)}{\mathbb{E}} \left[ \nabla_{\theta_i} \log \pi_{\theta_i}(a_{i,t}|s_t) Q_j^{\vec{\theta}}(s_t, \vec{a}_t) \right]\;\middle|\; s_0 \right]$$
The outer expectation already accounts for sampling actions, so we can simplify this to:

$$\nabla_{\theta_i} V_j^{\vec{\pi}}(s_0) = \underset{\tau\sim \mathbb{P}^{\vec{\theta}}}{\mathbb{E}}\left[ \sum_{t=0}^{\infty} \gamma^t \nabla_{\theta_i} \log \pi_{\theta_i}(a_{i,t}|s_t) Q_j^{\vec{\theta}}(s_t, \vec{a}_t) \;\middle|\; s_0\right]$$

By injecting the expression of $\nabla_{\theta_i} V_j^{\vec{\pi}}(s_0)$ into the former gradient, we get the \textbf{Fair Policy Gradient Theorem}
\begin{align*}
    \nabla_{\theta_i}J_i(\vec{\theta})
    &=\underset{s_0\sim\rho_0}{\mathbb{E}}\left[\sum_{j=1}^N c_i(j)\frac{\nabla_{\theta_i}V_j^{\vec{\theta}}(s_0)}{V_j^{\vec{\theta}}(s_0)}\right] \\
    &=\underset{s_0\sim\rho_0}{\mathbb{E}}\left[\sum_{j=1}^N c_i(j)\frac{\underset{\tau\sim \mathbb{P}^{\vec{\theta}}}{\mathbb{E}}\left[ \sum_{t=0}^{\infty} \gamma^t \nabla_{\theta_i} \log \pi_{\theta_i}(a_{i,t}|s_t) Q_j^{\vec{\theta}}(s_t, \vec{a}_t) \;\middle|\; s_0\right]}{V_j^{\vec{\theta}}(s_0)}\right] \\
    &=\underset{s_0\sim\rho_0}{\mathbb{E}}\left[\sum_{j=1}^N c_i(j)\underset{\tau\sim \mathbb{P}^{\vec{\theta}}}{\mathbb{E}}\left[ \sum_{t=0}^{\infty} \gamma^t \nabla_{\theta_i} \log \pi_{\theta_i}(a_{i,t}|s_t) \frac{Q_j^{\vec{\theta}}(s_t, \vec{a}_t)}{V_j^{\vec{\theta}}(s_0)} \;\middle|\; s_0\right]\right] \\
    &=\underset{s_0\sim\rho_0}{\mathbb{E}}\left[\underset{\tau\sim \mathbb{P}^{\vec{\theta}}}{\mathbb{E}}\left[ \sum_{t=0}^{\infty} \gamma^t \nabla_{\theta_i} \log \pi_{\theta_i}(a_{i,t}|s_t) \left(\sum_{j=1}^N c_i(j)\frac{Q_j^{\vec{\theta}}(s_t, \vec{a}_t)}{V_j^{\vec{\theta}}(s_0)}\right)\;\middle|\; s_0 \right]\right] \\
    &=\underset{\tau\sim \mathbb{P}^{\vec{\theta}}}{\mathbb{E}}\left[ \sum_{t=0}^{\infty} \gamma^t \nabla_{\theta_i} \log \pi_{\theta_i}(a_{i,t}|s_t) \left(\sum_{j=1}^N c_i(j)\frac{Q_j^{\vec{\theta}}(s_t, \vec{a}_t)}{V_j^{\vec{\theta}}(s_0)}\right)\right]. \\
\end{align*}

\end{proof}

\begin{proof}
The proof relies on showing that the expected value of the gradient term associated with the baseline $V_i^{\vec{\pi}}(s_t)$ is zero. Let's consider the expectation of this baseline term at a single timestep $t$ for a given state $s_t$. The expectation is over the joint action $\vec{a}_t$ drawn from the joint policy $\vec{\pi}(\cdot|s_t)$.

We want to show that for any $f:\mathcal{S}\rightarrow\mathbb{R}$
$$\mathbb{E}_{\vec{a}_t \sim \pi_{\vec{\theta}}(\cdot|s_t)} \left[ \nabla_{\theta_i} \log \pi_i(a_{i,t}|s_t) f(s_t) \right] = 0$$

Let's expand the expectation:
$$\sum_{\vec{a}_t \in \mathcal{A}} \pi_{\vec{\theta}}(\vec{a}_t|s_t) \nabla_{\theta_i} \log \pi_i(a_{i,t}|s_t)f(s_t)$$

The function $f(s_t)$ depends on the state $s_t$ but not on the specific action $\vec{a}_t$ being sampled at that state, so we can pull it out of the sum over actions for which it is constant. More formally, we can separate the sum over agent $i$'s actions from the other agents' actions. Let $\vec{a}_t = (a_{i,t}, a_{-i,t})$, where $a_{-i,t}$ are the actions of all agents other than $i$.

$$\sum_{a_{-i,t}} \prod_{j \neq i}\pi_j(a_{j,t}|s_t) \sum_{a_{i,t}} \pi_i(a_{i,t}|s_t) \nabla_{\theta_i} \log \pi_i(a_{i,t}|s_t)f(s_t)$$

Since $f(s_t)$ does not depend on $a_{i,t}$, we can factor it out of the inner sum:
$$= \left( \sum_{a_{-i,t}} \prod_{j \neq i}\pi_{\theta_j}(a_{j,t}|s_t) \right)f(s_t) \left( \sum_{a_{i,t}} \pi_{\theta_i}(a_{i,t}|s_t) \nabla_{\theta_i} \log \pi_{\theta_i}(a_{i,t}|s_t) \right)$$

The first term is a product of probabilities summed over all actions, which equals 1. Let's focus on the last term. Using the log-derivative trick 
\begin{align*}
    \sum_{a_{i,t}} \pi_{\theta_i}(a_{i,t}|s_t) \nabla_{\theta_i} \log \pi_{\theta_i}(a_{i,t}|s_t) 
    &= \sum_{a_{i,t}} \nabla_{\theta_i} \pi_{\theta_i}(a_{i,t}|s_t)\\
    &= \nabla_{\theta_i} \sum_{a_{i,t}} \pi_{\theta_i}(a_{i,t}|s_t)&&\text{(sum of probabilities over all possible actions equals 1)}\\
    &= \nabla_{\theta_i} (1)\\
    &=0
\end{align*}

Since the expected value of the baseline term is zero, subtracting it from the term inside the expectation in the original policy gradient theorem does not change the total expectation. In this work, we use the baseline $f(s_t)=\sum_{j=1}^Nc_i(j)\frac{V_j^{\vec{\theta}}(s_t)}{V_j^{\vec{\theta}}(s_0)}$

Since the expectation of the baseline term is zero, we can subtract it inside the outer expectation:
\begin{align*}
    \nabla_{\theta_i}J_i(\vec{\theta})
    &=\underset{\tau\sim\mathbb{P}^{\vec{\theta}}}{\mathbb{E}}\left[\sum_{t=0}^\infty \sum_{j=1}^Nc_i(j)\frac{Q^{\vec{\theta}}_j(s_t, \, \vec{a}_t)}{V^{\vec{\theta}}_j(s_0)}\nabla_{\theta_i}\log\pi_{\theta_i}(a_{i,\, t} \;|\; s_t)\right] \\
    &=\underset{\tau\sim\mathbb{P}^{\vec{\theta}}}{\mathbb{E}}\left[\sum_{t=0}^\infty \sum_{j=1}^Nc_i(j)\frac{Q^{\vec{\theta}}_j(s_t, \, \vec{a}_t)}{V^{\vec{\theta}}_j(s_0)}\nabla_{\theta_i}\log\pi_{\theta_i}(a_{i,\, t} \;|\; s_t)\right]\\
    &\quad\quad-\underset{\tau\sim\mathbb{P}^{\vec{\theta}}}{\mathbb{E}}\left[\sum_{t=0}^\infty \left(\sum_{j=1}^Nc_i(j)\frac{V_j^{\vec{\theta}}(s_t)}{V_j^{\vec{\theta}}(s_0)}\right)\nabla_{\theta_i}\log\pi_{\theta_i}(a_{i,\, t} \;|\; s_t)\right]\\
    &=\underset{\tau\sim\mathbb{P}^{\vec{\theta}}}{\mathbb{E}}\left[\sum_{t=0}^\infty \sum_{j=1}^Nc_i(j)\frac{Q^{\vec{\theta}}_j(s_t, \, \vec{a}_t)-V^{\vec{\theta}}_j(s_t)}{V^{\vec{\theta}}_j(s_0)}\nabla_{\theta_i}\log\pi_{\theta_i}(a_{i,\, t} \;|\; s_t)\right] \\
    &=\underset{\tau\sim\mathbb{P}^{\vec{\theta}}}{\mathbb{E}}\left[\sum_{t=0}^\infty A^{F,\vec{\theta}}_i(s_t,\vec{a}_t, s_0)\nabla_{\theta_i}\log\pi_{\theta_i}(a_{i,\, t} \;|\; s_t)\right] \\
\end{align*}

\end{proof}

\section{Pseudo code examples}
\begin{algorithm}
\caption{Fair MAA2C}
\label{alg:ma-a2c}
\begin{algorithmic}[1]
\REQUIRE Global learning rate $lr$, discount factor $\gamma$, number of agents $N$, number of steps $T_{max}$, altruism level $\alpha$.
\ENSURE For each agent $i \in \{1, ..., N\}$, a critic network $V(\mathbf{s}; \phi_i)$ with random parameters $\phi_i$.
\ENSURE For each agent $i \in \{1, ..., N\}$, an actor network $\pi(a_i | o_i; \theta_i)$ with random parameters $\theta_i$.
\STATE Initialize global step counter $T \leftarrow 0$.

\WHILE{$T < T_{max}$}
    \STATE Reset gradients: $d\phi_i \leftarrow 0$, and $d\theta_i \leftarrow 0$ for all $i=1, ..., N$.
    \STATE Initialize environment and get initial joint observation $\mathbf{o} = \{o_1, ..., o_N\}$ and global state $\mathbf{s}$.
    \STATE Initialize empty episode buffer $D$.
    
    \FOR{$t = 1$ to a terminal state or max episode length}
        \STATE For each agent $i$, select action $a_i \sim \pi(a_i | o_i; \theta_i)$.
        \STATE Execute joint action $\mathbf{a} = \{a_1, ..., a_N\}$, observe rewards $\mathbf{r} = \{r_1, ..., r_N\}$, next joint observation $\mathbf{o}'$, and next global state $\mathbf{s}'$.
        \STATE Store $(\mathbf{s}, \mathbf{a}, \mathbf{r}, \mathbf{s}')$ in the episode buffer $D$.
        \STATE $\mathbf{s} \leftarrow \mathbf{s}'$, $\mathbf{o} \leftarrow \mathbf{o}'$.
        \STATE $T \leftarrow T+1$.
    \ENDFOR

    \STATE For each agent $i$, estimate the return $R_{i,\, t}$ using TD update,
    \STATE For each agent $i$, estimate the state-value function of the initial state : $V_{i,\,0}\leftarrow V(s_0;\phi_i)$.
    \STATE For each agent $i$ and every timestep $t$, estimate the advantage function of agent $i$ (using GAE): $A_{i,\, t}$.
    \STATE For each agent $i$ and every timestep $t$, estimate the fair advantage function of agent $i$: $A^F_{i,\, t}\leftarrow \sum_j c_i(j)A_{j,\, t}/V_{j,\, 0}$.
    \STATE For each agent $i$, calculate the actor loss: $L_{\text{actor},\, i}(\theta_i)=\mathbb{E}_t[A^F_{i,\, t}\log\pi(a_i | o_i; \theta_i)]$.
    \STATE For each agent $i$,  calculate the critic loss: $L_{\text{critic},\, i}(\phi_i)=\mathbb{E}_t[(V(s_t;\phi_i)-R_t)^2]$
    \STATE For each agent $i$, update critic parameters: $\phi_i \leftarrow \phi_i - lr \cdot d\phi_i$.
    \STATE For each agent $i$, update actor parameters: $\theta_i \leftarrow \theta_i - \alpha \cdot d\theta_i$.

\ENDWHILE
\end{algorithmic}
\end{algorithm}

\begin{algorithm}
\caption{Fair MAPPO}
\label{alg:ma-ppo}
\begin{algorithmic}[1]
\REQUIRE Global learning rate $lr$, discount factor $\gamma$, number of agents $N$, number of steps $T_{max}$, altruism level $\alpha$, entropy regularization $\beta$, number of PPO update $T_{\text{PPO}}$.
\ENSURE For each agent $i \in \{1, ..., N\}$, a critic network $V(\mathbf{s}; \phi_i)$ with random parameters $\phi_i$.
\ENSURE For each agent $i \in \{1, ..., N\}$, an actor network $\pi(a_i | o_i; \theta_i)$ with random parameters $\theta_i$.
\STATE Initialize global step counter $T \leftarrow 0$.

\WHILE{$T < T_{max}$}
    \STATE Reset gradients: $d\phi_i \leftarrow 0$, and $d\theta_i \leftarrow 0$ for all $i=1, ..., N$.
    \STATE Initialize environment and get initial joint observation $\mathbf{o} = \{o_1, ..., o_N\}$ and global state $\mathbf{s}$.
    \STATE Initialize empty episode buffer $D$.
    
    \FOR{$t = 1$ to a terminal state or max episode length}
        \STATE For each agent $i$, select action $a_i \sim \pi(a_i | o_i; \theta_i)$.
        \STATE Execute joint action $\mathbf{a} = \{a_1, ..., a_N\}$, observe rewards $\mathbf{r} = \{r_1, ..., r_N\}$, next joint observation $\mathbf{o}'$, and next global state $\mathbf{s}'$.
        \STATE Store $(\mathbf{s}, \mathbf{a}, \mathbf{r}, \mathbf{s}')$ in the episode buffer $D$.
        \STATE $\mathbf{s} \leftarrow \mathbf{s}'$, $\mathbf{o} \leftarrow \mathbf{o}'$.
        \STATE $T \leftarrow T+1$.
    \ENDFOR

    \STATE For each agent $i\in\{1,\dots, N\}$, estimate the return $R_{i,\, t}$ using TD update,
    \STATE For each agent $i\in\{1,\dots, N\}$, estimate the state-value function of the initial state : $V_{i,\,0}\leftarrow V(s_0;\phi_i)$.
    \STATE For each agent $i\in\{1,\dots, N\}$ and every timestep $t$, estimate the advantage function of agent $i$ (using GAE): $A_{i,\, t}$.
    \STATE For each agent $i\in\{1,\dots, N\}$ and every timestep $t$, estimate the fair advantage function of agent $i$: $A^F_{i,\, t}\leftarrow \sum_j c_i(j)A_{j,\, t}/V_{j,\, 0}$.
    
    \FOR{$b=1$ to $T_{\text{PPO}}$}
        \STATE For each agent $i\in\{1,\dots, N\}$, calculate the actor PPO loss: $L_{\text{actor},\, i}(\theta_i)$.
        \STATE For each agent $i\in\{1,\dots, N\}$,  calculate the critic PPO loss: $L_{\text{critic},\, i}$
        \STATE For each agent $i$, update critic parameters: $\phi_i \leftarrow \phi_i - lr \cdot d\phi_i$.
        \STATE For each agent $i$, update actor parameters: $\theta_i \leftarrow \theta_i - \alpha \cdot d\theta_i$.
    \ENDFOR

\ENDWHILE
\end{algorithmic}
\end{algorithm}

\end{document}